\newtheorem{theorem}{Theorem}[section] 
\theoremstyle{definition}
\theoremstyle{remark}
\crefname{claim}{claim}{claims}
\Crefname{claim}{Claim}{Claims}
\crefname{app-corollary}{corollary}{corollaries}
\Crefname{app-corollary}{Corollary}{Corollaries}
\crefname{app-definition}{definition}{definitions}
\Crefname{app-definition}{Definition}{Definitions}
\crefname{figure}{figure}{figures}
\Crefname{figure}{Figure}{Figures}
\crefname{lemma}{lemma}{lemmata}
\Crefname{lemma}{Lemma}{Lemmata}
\crefname{app-lemma}{lemma}{lemmata}
\Crefname{app-lemma}{Lemma}{Lemmata}
\crefname{app-proposition}{proposition}{proposition}
\Crefname{app-proposition}{Proposition}{Proposition}
\crefname{app-theorem}{theorem}{theorems}
\Crefname{app-theorem}{Theorem}{Theorems}
\newcommand{\ptl}{\partial}
\newcommand{\til}[1]{\widetilde{#1}}
\newcommand{\Z}{\mathbb{Z}}
\newcommand{\R}{\mathbb{R}}
\newcommand{\M}[1]{\mathbf{#1}}
\begin{document}



\title{From Walking to Tunneling: An Investigation of Generalized Pilot-Wave Dynamics}

\author{
Akilan Sankaran\footnote{This work is a preprint.
} \\
Albuquerque Academy
\vspace{0.5in}\\
Under the direction of\\
\\
Diego Israel Chavez\\
Applied Mathematics Lab\\
Department of Mathematics,\\
Massachusetts Institute of Technology\\
\vspace{0in}
}


\date{
\today
}

\maketitle

\begin{abstract}
\begin{singlespace}

We investigate the ability of millimetric walking droplets to tunnel between cavities. By synthesizing experimental and theoretical analysis, we provide a framework for droplet tunneling mechanics in three spatial dimensions. We define a generalized Dirichlet-to-Neumann operator that allows us to explicitly characterize droplet and wave-field dynamics under highly intricate variable-topography systems, allowing for numerical simulations of droplet tunneling probabilities and macroscopic dynamical evolution to a greater degree of accuracy than existing models. Moreover, we demonstrate experimental droplet tunneling in complex cavity geometries and discuss many-droplet coupling in the context of tunneling observations.
\vspace{1in}
\end{singlespace}
\end{abstract}

\begin{spacing}{1.45}
\section{Overview of Pilot-Wave Hydrodynamics}

We will take a short walk through the recent dynamics of research on walking droplets. The observation of fluid bath destabilization under external oscillatory forcing by Faraday \cite{faraday1831xvii} catalyzed various experimental and analytical investigations of hydrodynamic systems that lie close to the \textit{Faraday instability threshold} \cite{oza2013trajectory, benjamin1954stability, douady1990experimental}. Fluid-mechanical systems slightly below the Faraday instability exhibit various dynamical phenomena that defy physical intuition \cite{douady1990experimental, kaydureybush, couder2005bouncing,  faria2017model}. In particular, Walker demonstrated in 1978 that a millimetric droplet of silicone oil can bounce indefinitely on a bath of the same fluid under the imposition of forcing close to the Faraday threshold \cite{walker1978drops}, challenging the expectation that surface tension effects would lead to coalescence of the two fluid bodies \cite{couder2005bouncing}. Such a \textit{bouncing state} occurs since the lubrication effect of dissipation beneath the droplet creates an air film with sufficiently strong lift forces to propel the droplet upwards, allowing for recurrent bouncing \cite{couder2005bouncing,  walker1978drops,fort2010path}. Each droplet impact on the air film establishes a crater at the fluid surface, which propagates to form a wave-field (Figure \ref{fig-wave-field-for-intro}) \cite{couder2005bouncing}; the robust dynamical properties of the coupling between a droplet and its associated wave-field form the crux of pilot-wave hydrodynamics \cite{bush2015pilot}.

As the Faraday instability threshold is approached from below, the bouncing state destabilizes into complex dynamical behavior (Figure \ref{fig-regime-diagram}) \cite{kaydureybush, couder2006single, gilet2009chaotic}. Although many of the emergent dynamical states are chaotic \cite{gilet2009chaotic, pucci2018walking}, regimes of stability exist; in particular, medium-sized millimetric droplets --- that is, droplets with diameter between 0.4 mm and 1 mm --- undergo a \textit{period-doubling} bifurcation, in which successive bounces alternate between large and small amplitudes \cite{couder2006single}. After further increasing the acceleration of external forcing, a subcritical drift bifurcation destabilizes a bouncing droplet in the period-doubled state into horizontal motion, termed a \textit{walking state} \cite{couder2006single, rahman2020walking}. To establish this state, a slight perturbation in the droplet position leads to it landing on the sloping portion of its associated wave \cite{oza2013trajectory}, creating a propulsive force that drives the walking droplet forward in a parabolic vertical trajectory \cite{molavcek2013drops}.

The walking droplet model serves as an exemplar of a fluid-dynamical setup exhibiting sufficient stability to be analytically tractable \cite{faria2017model, molavcek2013dropsb} while also lying sufficiently close to the instability threshold to exhibit counterintuitive characteristics \cite{douady1990experimental, kaydureybush}. We center our analysis on three specific characteristics of walking droplet dynamics. Firstly, the wave-field created by a walking droplet effectively stores the previous positions of the droplet within its topography \cite{bush2015pilot, eddi2011information}; therefore, a walking droplet is closely coupled to its associated, delocalized waves through \textit{path-memory} \cite{oza2013trajectory, bush2015pilot, eddi2011information}. Second, walking droplets exhibit analogies to quantum-mechanical behaviors: droplets have exhibited diffraction properties through single- and double-slit structures embedded into the fluid bath surface \cite{couder2006single, pucci2018walking}, quantum tunneling analogues by passing through cavity boundaries \cite{nachbin2017tunneling}, and separation into orbital states \cite{eddi2012level}. The development of these surprisingly acute correspondences could aid in better comprehending quantum-mechanical behaviors from a macroscopic standpoint \cite{bush2015pilot}. Finally, the simplicity of the pilot-wave dynamics setup offers the possibility of theoretical and numerical descriptions of droplet behaviors in addition to experiment \cite{bush2015pilot, couder2006single, molavcek2013drops, molavcek2013dropsb}.

\begin{figure}[!t]
    \centering
\begin{subfigure}[t]{0.5\textwidth}
\centering
  \includegraphics[height=1.4in]{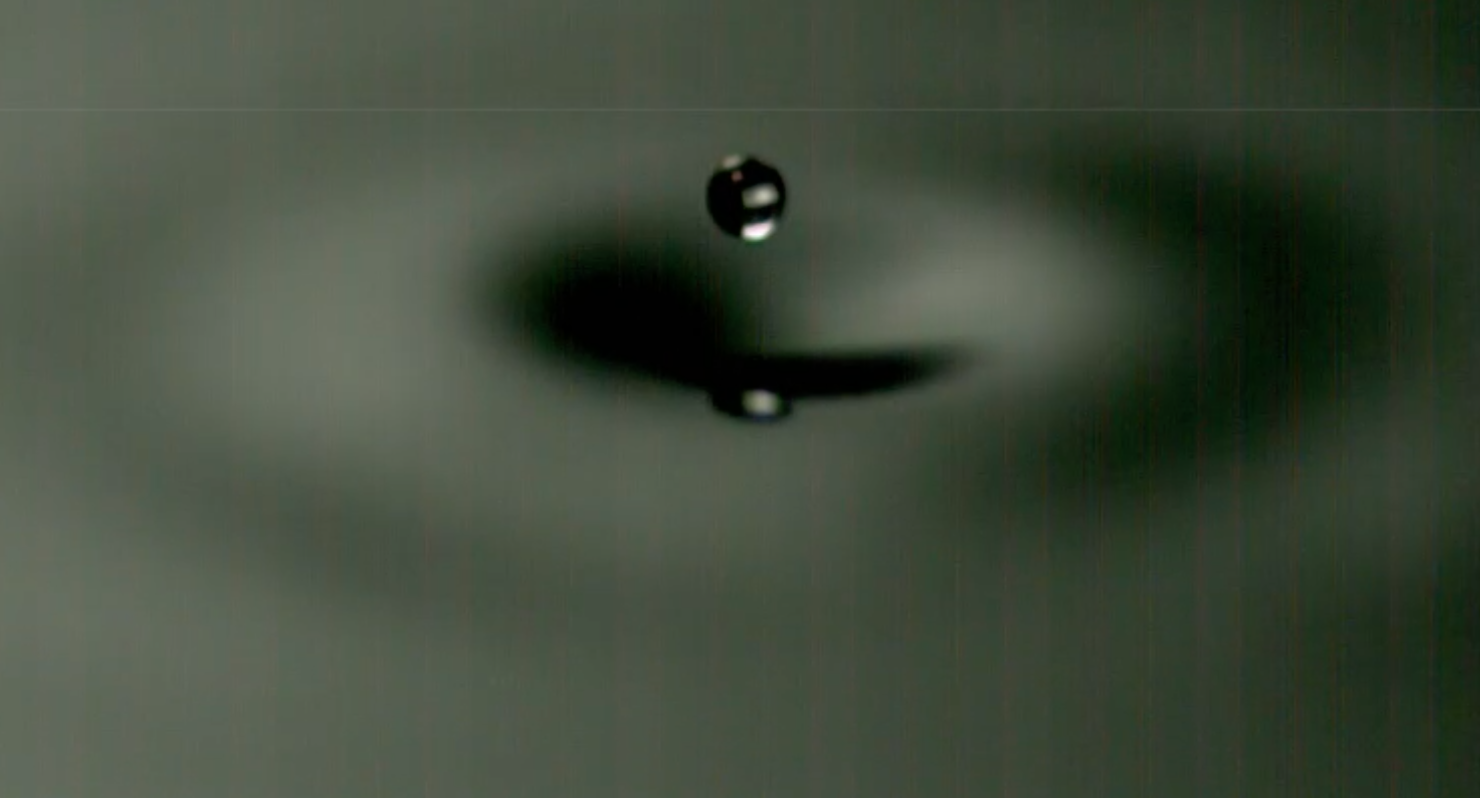}
\caption{Droplet position at maximum height.}
\end{subfigure}\hfill
\begin{subfigure}[t]{0.5\textwidth}
\centering
    \includegraphics[height=1.4in]{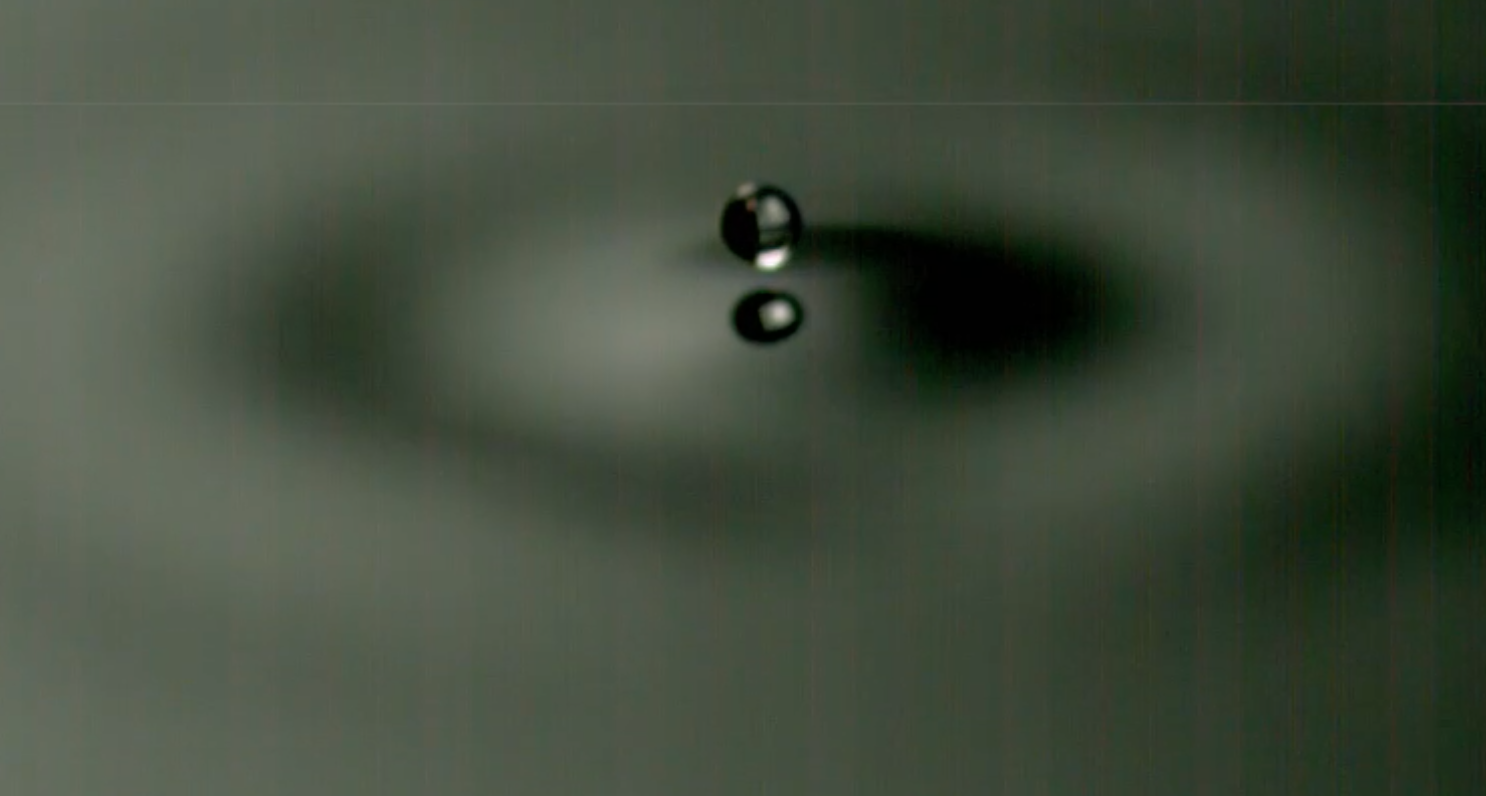}
\caption{Droplet descent --- lubrication effect.}
\end{subfigure}

\begin{subfigure}[t]{0.5\textwidth}
\centering
    \includegraphics[height=1.4in]{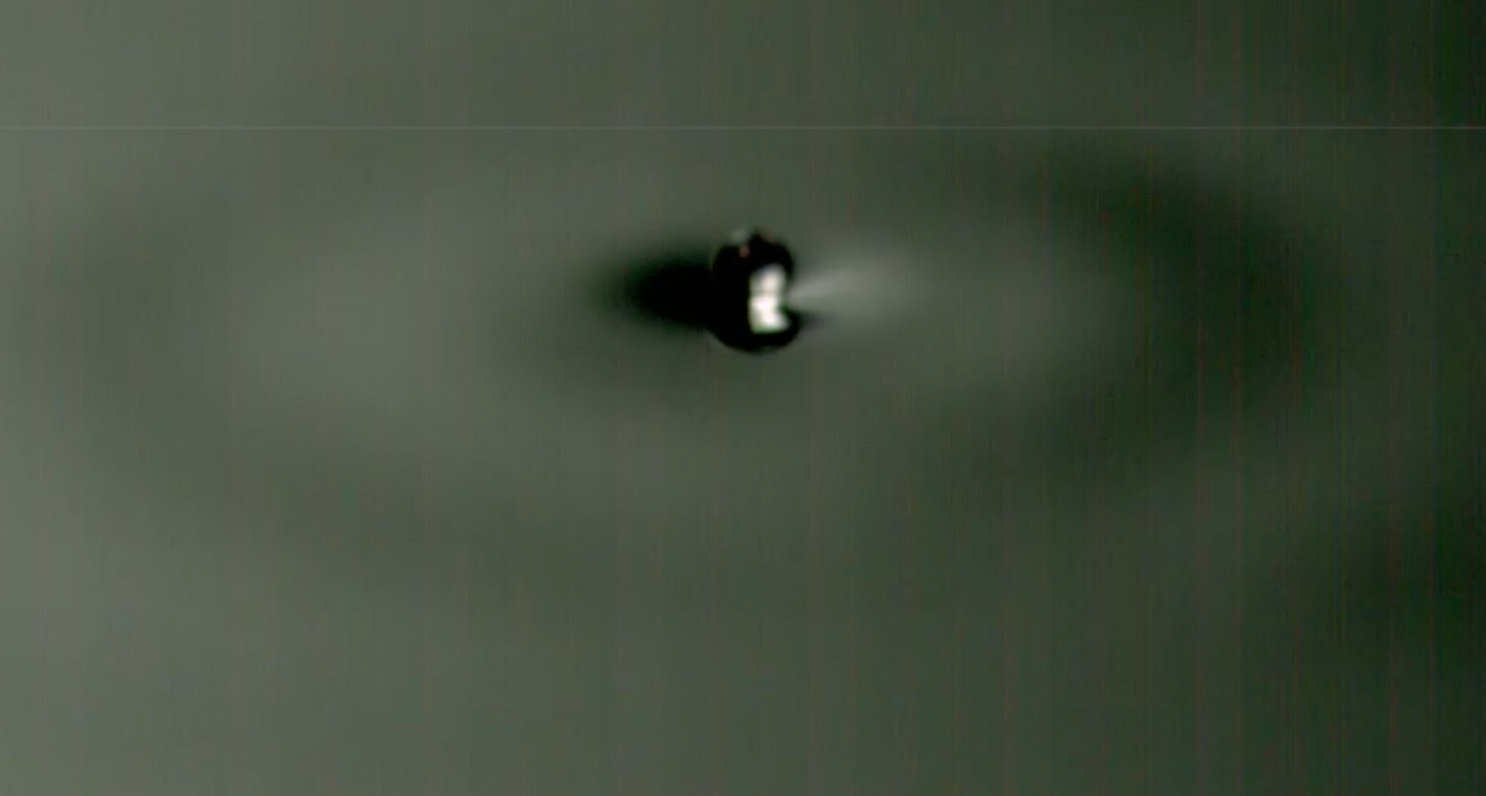}
\caption{Droplet-wavefield interaction.}
\end{subfigure}\hfill
\begin{subfigure}[t]{0.5\textwidth}
\centering
    \includegraphics[height=1.4in]{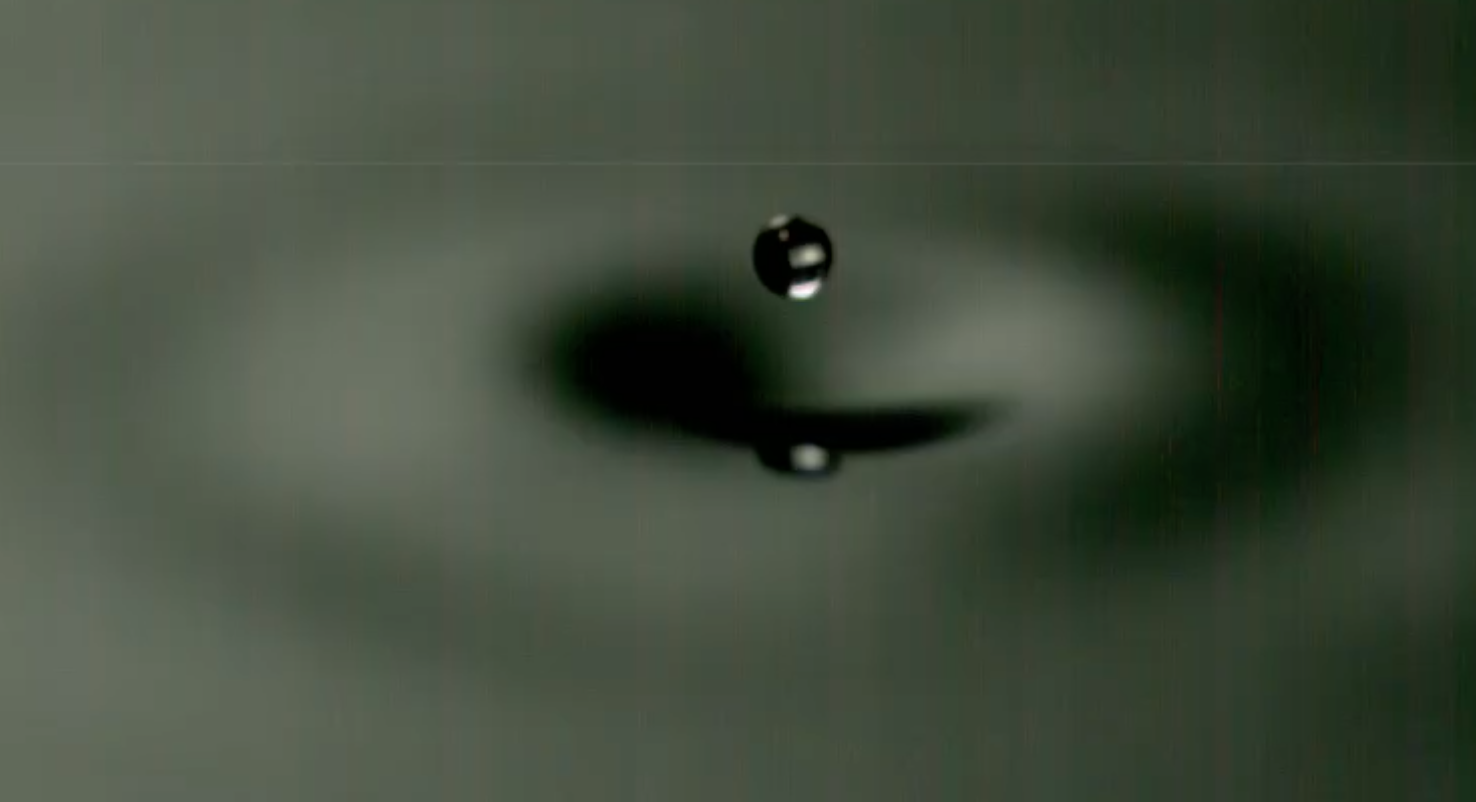}
\caption{Post-impact ascension.}
\end{subfigure}\hfill
    \caption{A millimetric bouncing droplet and its associated wave field, with the fluid bath bottom topography being a constant-depth circular corral. By setting the oscillatory acceleration $\gamma$ to be $1.6$ g and ensuring that the vibration number $\Omega$ --- which compares the angular forcing frequency to the droplet's oscillatory frequency --- satisfies $\Omega > 0.8$, we ensure that the droplet lies in the simple-bouncing regime of Figure \ref{fig-regime-diagram}.}
    \label{fig-wave-field-for-intro}
\end{figure}

\begin{figure}[t]
    \centering
    \includegraphics[width = 0.7 \columnwidth]{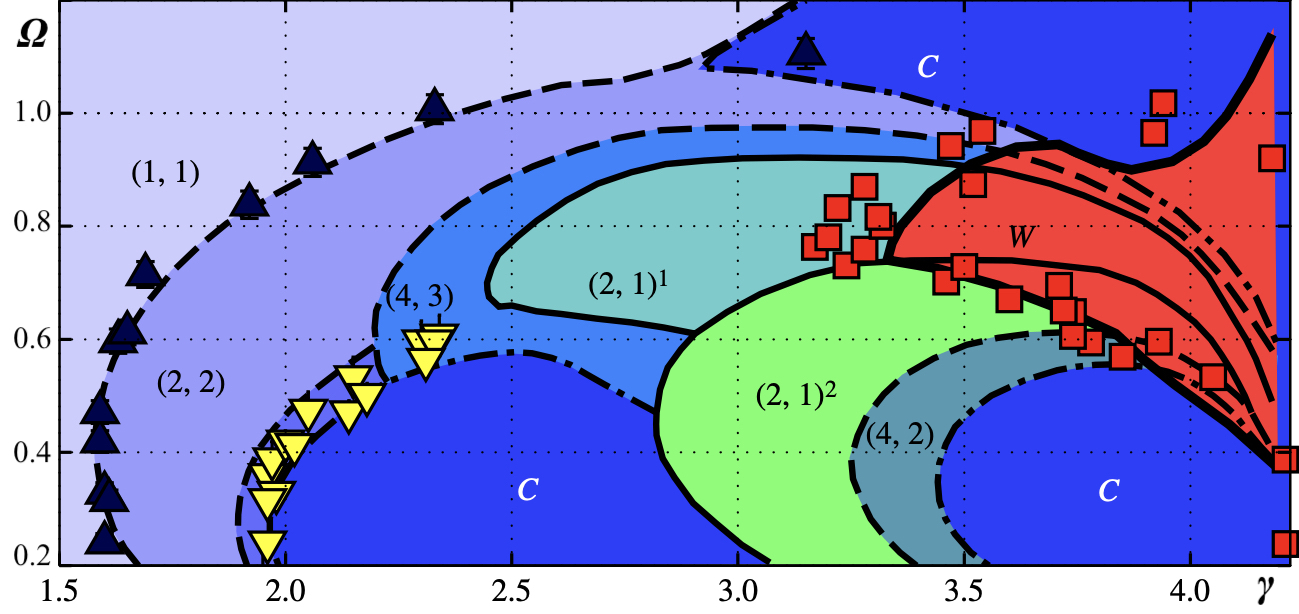}
    \caption{A phase diagram (adapted from \cite{molavcek2013dropsb}) at a driving oscillation frequency of 70 Hz, displaying the dependence of droplet dynamics on the forcing parameter $\Gamma = \gamma/\mathrm{g}$ (where $\gamma$ gives the magnitude of oscillatory forcing) and vibration number $\Omega$. The regions denoted by $C$ denote chaos, whereas the regions designated by $(a,b)$, for integers $a,b$, denote $b$ drop contacts occurring within $a$ forcing periods \cite{gilet2009chaotic}. Our analysis centers on the walking regime, denoted $W.$}
    \label{fig-regime-diagram}
\end{figure}

We will focus on the hydrodynamic quantum analogy of droplet tunneling between cavities \cite{nachbin2017tunneling, eddi2009unpredictable}, although our analytical models are more general. Our major contributions are threefold: (i) we develop a general theoretical characterization of three-dimensional droplet dynamics on a fluid bath (Section §\ref{sec3}), (ii) we conduct numerical simulations of droplet and wave-field time evolution, offering the closest known numerical approximation to experimental results (Section §\ref{sec4}), and (iii) we present experimental results on droplet tunneling in cavities with variable bottom topography (Section §\ref{sec5}). Prior to discussing such results, we describe the methodology and limitations of existing mathematical models for droplet dynamics (Section §\ref{sec2}).

\section{Existing Models of Droplet Dynamics}
\label{sec2}

Theoretical descriptions of walking droplet trajectories have gained significant recent interest \cite{oza2013trajectory, kaydureybush, bush2015pilot, molavcek2013drops,molavcek2013dropsb}. We discuss the advantages and limitations of well-known models characterizing droplet and wave-field behavior to motivate our extensions in Section §\ref{sec3}.

\paragraph{Stroboscopic Models.}

To establish a model that accurately describes the motion of walking droplets, we assume that our fluid lies below the Faraday instability threshold \cite{faraday1831xvii, benjamin1954stability}. Many prior models of droplet trajectories neglect the vertical component of the droplet's position under the \textit{stroboscopic approximation} \cite{molavcek2013drops}, which assumes that the droplet's vertical position may be simplified as periodic \cite{protiere2006particle}. Under such a stroboscopic model, Oza \textit{et al.}\ \cite{oza2013trajectory} utilize a force balance to derive a trajectory equation for the two-dimensional horizontal position function $\M{x}_p(t) = \langle x(t), y(t) \rangle$ for a walking droplet at time $t$:
\begin{align}
\label{eq_oza}
m\M{x}_p''(t) + D\M{x}_p'(t) = -F(t) \nabla h(\M{x}_p, t),
\end{align}
where $m$ is the droplet mass, $D$ is the viscous drag coefficient, $h$ is an explicitly constructed model of the wave-field at the current droplet position, and $F(t)$ is the external forcing on the drop \cite{molavcek2013drops}. Despite its analytical tractability, the stroboscopic model neglects variations in the bouncing period of the droplet; since we seek to analyze phenomena that exhibit large variations in oscillatory periods, we must venture beyond the stroboscopic approximation to model behaviors that more accurately reflect experimental conditions.



\paragraph{Droplet Tunneling Investigations: Existing Precedent.}

In order to obtain a precise characterization of the pilot-wave system, we must return to the fundamentals of fluid behavior. The underlying equations governing the motion of the a Newtonian fluid are the incompressible Navier-Stokes equations \cite{stokes1851effect, kundu2015fluid}, which we present as follows.
\begin{theorem}[Navier-Stokes \cite{stokes1851effect}]
    For an incompressible fluid with velocity $\M{u}(x, y, z, t)$, kinematic viscosity $\nu$, body acceleration $\M{F}(t)$, density $\rho$, and pressure $p(x, y, z, t)$, the time-derivative of velocity is given by the following partial differential equation:
\begin{align*}
    \M{u}_{t}  = - \frac{1}{\rho} \nabla p + \nu \Delta \M{u} + \M{F} - (\M{u} \cdot \nabla) \M{u},
\end{align*}
subject to the additional condition that $\nabla \cdot \M{u} = 0$ throughout the region of interest.
\end{theorem}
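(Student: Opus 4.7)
The plan is to derive the stated evolution equation from first principles of mass and momentum conservation, inserting the constitutive relation for a Newtonian fluid at the decisive step. First I would apply conservation of mass to an arbitrary material control volume $V(t)$ and invoke the Reynolds transport theorem. For a constant-density fluid this collapses directly to the divergence-free constraint $\nabla \cdot \M{u} = 0$, establishing the side condition in the theorem statement and, crucially, providing the algebraic identity that makes the viscous term simplify later.

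Next, I would apply conservation of linear momentum to the same $V(t)$. Using Reynolds transport on $\int_{V(t)} \rho \M{u}\, dV$ together with Cauchy's stress principle for the surface traction yields the integral momentum balance, which after invoking the divergence theorem and the arbitrariness of $V(t)$ localizes to Cauchy's equation of motion $\rho \tfrac{D\M{u}}{Dt} = \nabla \cdot \boldsymbol{\sigma} + \rho \M{F}$. Expanding the material derivative $\tfrac{D}{Dt} = \ptl_t + (\M{u} \cdot \nabla)$ reproduces the transient term $\M{u}_t$ and the convective term $(\M{u} \cdot \nabla)\M{u}$ that appear in the target equation.

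The main obstacle, as I see it, is the reduction of $\nabla \cdot \boldsymbol{\sigma}$ to the clean expression $-\nabla p + \mu \Delta \M{u}$. Here I would impose the Newtonian constitutive law
\begin{equation*}
\boldsymbol{\sigma} = -p\,\M{I} + \mu\bigl(\nabla \M{u} + (\nabla \M{u})^{\top}\bigr) + \lambda (\nabla \cdot \M{u})\,\M{I},
\end{equation*}
where $\mu$ is the dynamic viscosity and $\lambda$ the second viscosity. Taking the divergence component-wise and applying the identity $\nabla \cdot (\nabla \M{u} + (\nabla \M{u})^{\top}) = \Delta \M{u} + \nabla(\nabla \cdot \M{u})$, the incompressibility constraint from the first step kills both the $\lambda$-term and the $\nabla(\nabla \cdot \M{u})$ contribution, leaving $\nabla \cdot \boldsymbol{\sigma} = -\nabla p + \mu \Delta \M{u}$. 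The delicate part is keeping track of indices in the tensor calculus and justifying the passage from the integral balance to its pointwise form, which requires sufficient smoothness of $\M{u}$ and $p$ so that the integrands are continuous.

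To close, I would substitute the simplified $\nabla \cdot \boldsymbol{\sigma}$ back into Cauchy's equation, divide through by $\rho$, introduce the kinematic viscosity $\nu = \mu/\rho$, and move the convective term to the right-hand side. This yields $\M{u}_t = -\tfrac{1}{\rho}\nabla p + \nu \Delta \M{u} + \M{F} - (\M{u} \cdot \nabla)\M{u}$ exactly as stated, with the divergence-free condition carried along as the auxiliary constraint.
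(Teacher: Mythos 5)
Your derivation is correct, but there is nothing in the paper to compare it against: the paper presents this statement as a cited classical result (Stokes, 1851) and offers no proof at all, treating it as the physical starting point for the quasipotential reduction that follows. Your route is the standard continuum-mechanics derivation --- Reynolds transport for mass gives $\nabla \cdot \M{u} = 0$, Reynolds transport plus Cauchy's stress principle and localization give $\rho \frac{D\M{u}}{Dt} = \nabla \cdot \boldsymbol{\sigma} + \rho \M{F}$, and the Newtonian constitutive law with the identity $\nabla \cdot \bigl(\nabla \M{u} + (\nabla \M{u})^{\top}\bigr) = \Delta \M{u} + \nabla(\nabla \cdot \M{u})$ collapses the stress divergence to $-\nabla p + \mu \Delta \M{u}$ once incompressibility is imposed. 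All of these steps are sound, and you correctly note that dividing by $\rho$ and setting $\nu = \mu/\rho$ matches the statement's use of the body acceleration $\M{F}$ (force per unit mass). The one point worth making explicit is that this is a modeling derivation rather than a deduction from prior mathematical results: the Newtonian constitutive relation is a postulate that defines the class of fluids under consideration, so it must be stated as a hypothesis (the paper's surrounding text does restrict attention to Newtonian fluids, so this is consistent, but the theorem statement itself says only ``incompressible''). With that hypothesis made explicit, your argument is complete and is the appropriate way to establish the result.
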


Using weakly viscous quasipotential theory and assuming low advective term as well as weak dissipation, the Navier-Stokes equations may be nondimensionalized, linearized, and simplified through a Helmholtz decomposition of the fluid velocity potential $\phi$, where $\phi$ satisfies $\nabla \phi = \M{u}$ throughout the fluid \cite{dias2008theory}. Previous studies apply this reduction to the pilot-wave system, condensing the system into a set of differential equations that encapsulate the complete dynamics of the three-dimensional wave-field \cite{milewski2015faraday}. The resulting equations describe the time-evolution of the velocity potential and the \textit{free surface} (the elevation of the fluid at the top of the bath). Assuming that the behavior of the droplet is invariant in one spatial coordinate, we may specify the details of evolution equations using the following theorem.

\begin{theorem}[Milewski, et al. \cite{milewski2015faraday}]
\label{milewski2015faraday-evolution-eta-phi}
    If $\eta(x, t)$ denotes the wave-field elevation in a fluid bath, $\phi(x, z, t)$ denotes the velocity potential at position $(x, 0, z)$ and time $t$, and $P_D(u)$ gives the pressure exerted by the droplet on the bath at position $u$, then the evolution of the wave-field due to droplet motion $x_p(t)$ may be modeled along the free surface as:
\begin{align}
\label{fs1}
    \frac{\ptl \phi(x, z, t)}{\ptl t} &= - g(t) \eta(x, t) + \frac{\sigma}{\rho} \frac{\ptl^2 \eta(x, t)}{\ptl x^2} + 2 \nu \frac{\ptl^2 \phi(x, z, t)}{\ptl x^2} - \frac{1}{\rho} P_D(x - x_p(t)), \\
\label{fs2}
    \frac{\ptl \eta}{\ptl t} &= \frac{\ptl \phi}{\ptl z} + 2 \nu \frac{\ptl^2 \eta}{\ptl x^2},
\end{align}
where $\rho$, $\sigma$, and $\nu$ denote the density, surface tension, and kinematic viscosity, respectively, of the fluid and $g(t)$ denotes oscillatory forcing.
\end{theorem}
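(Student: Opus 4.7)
The plan is to reduce the incompressible Navier-Stokes system to a coupled pair of surface evolution equations for the elevation $\eta$ and the velocity potential $\phi$ by combining weakly viscous quasipotential theory, linearization about the quiescent state, and suitable free-surface boundary conditions. First, I would perform a Helmholtz decomposition of the velocity field into an irrotational piece $\nabla \phi$ and a rotational remainder; under the weak-dissipation hypothesis the rotational part is confined to a thin boundary layer at the free surface, so the bulk flow is essentially potential and incompressibility yields $\Delta \phi = 0$ in the bulk. The assumed invariance in one spatial coordinate reduces the unknowns to $\phi(x,z,t)$ and $\eta(x,t)$. Dropping the nonlinear term $(\M{u} \cdot \nabla) \M{u}$ under the low-advection hypothesis leaves the linearized momentum equation $\M{u}_t = -\rho^{-1} \nabla p + \nu \Delta \M{u} + \M{F}(t)$.

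Next, I would impose the two free-surface boundary conditions. The dynamic condition---Bernoulli's equation with the surface pressure furnished by the Young-Laplace capillary jump, the localized droplet pressure $P_D(x - x_p(t))$, and the oscillating effective gravity $g(t)$---yields, after linearization at $z = \eta$, $\ptl_t \phi = -g(t) \eta + (\sigma/\rho) \ptl_x^2 \eta - \rho^{-1} P_D(x - x_p(t))$, up to viscous corrections; the capillary term follows from the leading-order expansion of the mean curvature in $\ptl_x^2 \eta$, with sign fixed by the outward normal convention. The kinematic condition, obtained from $D_t(z - \eta) = 0$ evaluated at $z = \eta$, linearizes to $\ptl_t \eta = \ptl_z \phi$, again up to its own viscous correction.

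To close the derivation, I would append the $2\nu \ptl_x^2$ corrections appearing in each equation via the Dias-Dyachenko-Zakharov formulation of weakly damped potential flow: a matched asymptotic analysis of the thin rotational boundary layer at the free surface yields the leading-order viscous corrections $2\nu \ptl_x^2 \phi$ in the dynamic condition and $2\nu \ptl_x^2 \eta$ in the kinematic condition, producing exactly \eqref{fs1}--\eqref{fs2}. The chief technical obstacle will be justifying the precise form of these viscous corrections: the coefficient $2\nu$ and the operator $\ptl_x^2$ are invisible from the Euler limit alone; they arise from the leading-order effect of the rotational boundary layer on the outer potential flow, and their derivation requires a matched-asymptotics argument that must remain self-consistent with the neglected nonlinear and dissipative scales. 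As a consistency check I would verify that the resulting linear system reproduces the standard weakly viscous gravity-capillary dispersion relation $\omega^2 + 4 i \nu k^2 \omega = g k + (\sigma/\rho) k^3$, confirming that both viscous corrections appear at the correct order.
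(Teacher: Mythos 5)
Your derivation is correct and follows essentially the same route the paper relies on: it gives no proof of its own, instead citing Milewski \emph{et al.}\ and describing precisely the reduction you carry out --- weakly viscous quasipotential theory with a Helmholtz decomposition, linearization about the quiescent state, the dynamic (Bernoulli/Young--Laplace) and kinematic free-surface conditions, and the $2\nu\,\ptl_x^2$ corrections from the Dias--Dyachenko--Zakharov formulation. Your dispersion-relation consistency check $\omega^2 + 4 i \nu k^2 \omega = g k + (\sigma/\rho) k^3$ is a worthwhile addition that the paper does not include.
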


The conditions specified by Theorem \ref{milewski2015faraday-evolution-eta-phi} necessitate a description of the droplet position $x_p(t)$, as every droplet impact is coupled with the wave-field itself. We adopt the formulation of Nachbin \textit{et al}.\ \cite{nachbin2017tunneling}; our trajectory equation mirrors the stroboscopic model in equation (\ref{eq_oza}), except that the forcing term $g(t)$ adds a vertical impact on droplet and free surface behavior in a linear spring model:
\begin{align}
\label{trajnew}
    m \cdot x_p''(t) + c F(t) \cdot x_p'(t) = -F(t) \cdot \eta_{x}(x_p(t), t),
\end{align}
where $c$ is damping constant. We impose two additional physical restrictions: there must be no fluid flow through the bottom of the fluid bath (that is,  $\phi_z = 0$ at $z = -H(\M{x})$, where $H(\M{x})$ specifies the depth of the bath at position $\M{x}$) and the velocity potential must satisfy Laplace's equation, $\Delta \phi = 0$, throughout the fluid bath. These constraints, together with boundary-value restrictions in Theorem \ref{milewski2015faraday-evolution-eta-phi} and equation (\ref{trajnew}), characterize droplet dynamics. 

It is computationally expensive to solve the system of differential equations (\ref{fs1}–\ref{trajnew}) on the entire fluid bath, due to the coupling of the equations and difficulties with approximations made by Computational Fluid Dynamics (CFD) methods. Accordingly, Nachbin \textit{et al.} \cite{nachbin2017tunneling} develop an alternative to the CFD approach by utilizing a two-dimensional Dirichlet-to-Neumann (DtN) operator to recover the evolution of the velocity potential $\phi$ from wave-field evolution at the fluid interface. Accordingly, we present an extension of the DtN operator into three dimensions in the following section, which allows for a tight approximation to physical droplet trajectories.

\section{A Generalized Hydrodynamic Model for Droplet Dynamics}
\label{sec3}

In 2009, Eddi \textit{et al.}\ \cite{eddi2009unpredictable} observed experimental droplet tunneling; that is, a walking droplet traversed a subsurface barrier between two deeper wells. Nachbin \textit{et al.}\ \cite{nachbin2017tunneling} established a numerical model for droplet trajectories under variable topography in 2017, which could then be applied to investigate tunneling --- assuming invariance of fluid behavior in one spatial dimension. Accordingly, we develop a model for complete three-dimensional droplet dynamics under variable bottom topography without assumption of invariance.

\subsection{Distilling Droplet Dynamics in Three Dimensions}

From the reductions invoked to obtain the result of Theorem \ref{milewski2015faraday-evolution-eta-phi}, it follows that every observable property of our fluid-dynamical system may be obtained from the scalar-valued functions $\phi$, $\eta$, assuming a one-dimensional droplet position function $x_p(t)$. Motivated by the extensions developed by Faria in 2017 \cite{faria2017model} and the lower-dimensional model of Nachbin \textit{et al.} \cite{nachbin2017tunneling}, we offer an explicit characterization of the evolution of $\phi$ and $\eta$ in the case where the droplet position $\M{x}_p(t)$ varies in both horizontal components.
\begin{theorem}
For a velocity potential $\phi(x, y, z, t)$ in a compact fluid region $\mathcal{R}$ with bottom topography $H(\M{x})$ and free surface $\eta$ initially stable at $z = 0$, the following equations characterizing the spatiotemporal evolution of the wave-field and velocity potential hold:
    \begin{align}
\centering
\label{thm_3_first_first_eq}
    0 &= \nabla \phi, \\
\label{bottom_thm_3.1}
    0 &= \phi_z - \phi_x H_x - \phi_y H_y,\\
\label{phi_t_thm_3.1}
    \phi_t &= -g(t) \eta + \frac{\sigma}{\rho} \Delta_{H} \eta  + 2 \nu \Delta_{H} \phi - \frac{1}{\rho} P_D(\M{x} - \M{x}_p(t), t) ,\\
\label{eta_t_thm_3.1}
    \eta_t &= \phi_z + 2 \nu \Delta_{H} \eta,
\end{align}
where $\Delta_{H}$ denotes the horizontal Laplacian operator (i.e., $\Delta_H = \partial_x^2 + \partial_y^2$). The first condition holds throughout the fluid, the second holds at the bottom of the bath, and the third and fourth hold at the free surface (Figure \ref{fig-intuition}).
\end{theorem}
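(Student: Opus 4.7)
The plan is to lift the one-dimensional Milewski-type derivation of Theorem~\ref{milewski2015faraday-evolution-eta-phi} to three spatial dimensions while simultaneously generalizing the bottom condition to an arbitrary topography $z=H(\M{x})$. I would start from the incompressible Navier--Stokes system and apply the same weakly viscous quasipotential reduction invoked in the 1D case: perform a Helmholtz decomposition $\M{u}=\nabla\phi+\M{u}^{\text{rot}}$, retain the rotational part only as a linear boundary-layer correction at the free surface (the Dias--Dyachenko--Zakharov procedure cited for the 1D model), and drop the quadratic advective term $(\M{u}\cdot\nabla)\M{u}$ as second order in the wave amplitude. Incompressibility of the potential bulk flow then yields Laplace's equation $\Delta\phi=0$ in $\mathcal{R}$, which is the intended content of (\ref{thm_3_first_first_eq}): the displayed $0=\nabla\phi$ must be read as $\Delta\phi=0$, in agreement with the constraint stated in Section~\ref{sec2}.

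The bottom condition (\ref{bottom_thm_3.1}) would be obtained purely from no penetration. Writing the bath bottom as the level set $F(x,y,z)=z-H(x,y)=0$, the outward normal is proportional to $(-H_x,-H_y,1)$, so the impermeability requirement $\M{u}\cdot\nabla F=0$ gives $\phi_z-\phi_x H_x-\phi_y H_y=0$ without any linearization, because the bottom is time independent. This is the only place where a three-dimensional topography $H(\M{x})$ genuinely enters; the coefficients $H_x,H_y$ are exactly the new ingredients relative to the 1D setup.

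The two free-surface equations would be obtained by imposing the kinematic and dynamic conditions at $z=\eta(x,y,t)$ and then linearizing about $z=0$. The kinematic statement that surface particles stay on the surface reads $\eta_t=\phi_z-\phi_x\eta_x-\phi_y\eta_y$; after discarding the quadratic terms and appending the weakly viscous surface correction, one recovers (\ref{eta_t_thm_3.1}). The dynamic statement is the Bernoulli-type relation $\phi_t+\tfrac12|\nabla\phi|^2+g(t)\eta=(\sigma/\rho)\kappa+\text{viscous}-(1/\rho)P_D$; linearizing the mean curvature of the graph $z=\eta$ in $\mathbb{R}^3$ yields $\kappa\approx\Delta_H\eta=\eta_{xx}+\eta_{yy}$, the weakly viscous correction takes the form $2\nu\Delta_H\phi$, and collecting terms produces (\ref{phi_t_thm_3.1}).

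The principal obstacle is confirming that the weakly viscous surface corrections lift cleanly from one horizontal dimension to two, rather than picking up anisotropic cross-terms. The Dias--Dyachenko--Zakharov boundary-layer expansion is carried out in a specific planar geometry, so one must redo the matched asymptotics in two horizontal variables. Because the undisturbed surface is flat and the horizontal plane is isotropic, the boundary-layer correction is tangential-Laplacian in character, and it therefore collapses to $2\nu(\ptl_x^2+\ptl_y^2)=2\nu\Delta_H$ acting on $\phi$ and $\eta$. A secondary subtle point is the linearized mean curvature, which is genuinely more intricate in three dimensions than in two, but at leading order in $|\nabla_H\eta|$ it reduces to $\Delta_H\eta$, preserving the form of the surface-tension term. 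With these two verifications in hand, equations (\ref{thm_3_first_first_eq})--(\ref{eta_t_thm_3.1}) follow by the same chain of reductions that yields Theorem~\ref{milewski2015faraday-evolution-eta-phi}.
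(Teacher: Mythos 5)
Your proposal follows exactly the route the paper intends: it declares the theorem a ``direct extension of the derivation of analogous equations by Milewski'' via the weakly viscous quasipotential reduction and omits the details, which are precisely the Helmholtz decomposition, no-penetration bottom condition, and linearized kinematic/dynamic surface conditions you spell out. Your version is in fact more complete than the paper's --- it correctly reads equation (\ref{thm_3_first_first_eq}) as $\Delta\phi=0$ and explicitly addresses the only genuinely new point in passing to two horizontal dimensions, namely that isotropy of the flat reference surface keeps the viscous corrections and linearized curvature in the form $\Delta_H$ --- so no gap to report.
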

\begin{proof}[Proof Sketch]
    The result is a direct extension of the derivation of analogous equations by Milewski \cite{milewski2015faraday}; in particular, all four equations result from nondimensionalizing weakly viscous equations for water-wave dynamics. Accordingly, we omit the proof.
\end{proof}

In equations (\ref{thm_3_first_first_eq}–\ref{eta_t_thm_3.1}), several parameters require concrete specification. Following precedent in numerical methods, we assume that the forcing oscillation $g(t)$ in equation (\ref{phi_t_thm_3.1}) varies with amplitude equal to the forcing amplitude $\gamma$; that is, $g(t) = \mathrm{g} + \gamma \cos(\omega t)$ for oscillatory frequency $\omega$ \cite{oza2013trajectory, eddi2009unpredictable}. We discuss two possible approximations for the more complex droplet pressure term, $P_D$, in Section §\ref{sec4}. We now invoke a nondimensionalization procedure to transform our system into one involving physically motivated parameters (see Figure \ref{fig-intuition}). 

\begin{theorem}
\label{nondim_procedure}
The evolution of the wave-field and a walking droplet on a fluid bath with variable topography can be characterized through the following differential equations:
\begin{align}
    \label{laplacian_1}
    0 &= \mu^2 \Delta_{{H}}{{\phi}} + \phi_{zz}, \\
\label{bottom_1}
    0 &= \phi_z - \mu^2({{\phi}_{{x}} {H}_{{x}} + {\phi}_{{y}} {H}_{{y}}}),\\
\label{weird_1}
    \phi_t &= - G \cdot  (1 + \gamma \cos(\omega t)) {\eta} + \frac{2}{\mathrm{Re}}  \Delta_{{H}}{{\phi}} + \mathrm{Bo} \, \Delta_{{H}}{{\eta}} - \frac{GM}{\rho} P_D({\M{{x}}} - {\M{{x}}_p(t)}, t), \\
\label{kinematical_condition_1}
    {\eta}_{{t}} &=  \frac{1} {\mu} {\phi}_{{z}} +\frac{2}{\mathrm{Re}} \Delta_H{{\eta}}, \\
\label{position_update}
    - F(t) \nabla \eta (\M{x}_p) &=  m \M{x}_p'' + c F(t) \M{x}_p'.
\end{align}
with\footnote{For completeness, we note that $T_F$ designates the Faraday period, and $\lambda_F$ designates the Faraday wavelength.} $\mu = \frac{h}{\lambda_F}, G = \frac{\text{g} T_F^2}{\lambda_F}$, $\mathrm{Bo} = \frac{\sigma T_F^2}{\rho \lambda_F^3}$, $M = \frac{m}{\rho \lambda_F^3}$.
\end{theorem}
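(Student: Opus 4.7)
The plan is to carry out a systematic nondimensionalization of the field equations from the preceding theorem together with the stroboscopic trajectory equation (\ref{trajnew}), choosing the characteristic scales so that the horizontal/vertical anisotropy collapses into the single parameter $\mu$ and so that the viscous, gravitational, and capillary coefficients reduce to the dimensionless groups $\mathrm{Re}$, $G$, and $\mathrm{Bo}$.

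First I would fix the scaling conventions. Horizontal coordinates and the droplet position are normalized by the Faraday wavelength, $\tilde{x} = x/\lambda_F$, $\tilde{y} = y/\lambda_F$, $\tilde{\M{x}}_p = \M{x}_p/\lambda_F$; the vertical coordinate and the bottom topography are normalized by the quiescent depth, $\tilde{z} = z/h$, $\tilde{H} = H/h$; time is rescaled by the Faraday period, $\tilde{t} = t/T_F$. The free-surface elevation and velocity potential are given amplitude scales $\eta_0$ and $\phi_0$, respectively, with $\phi_0$ left free for the moment---the kinematic condition will pin it down. The periodic forcing is written $g(t) = \mathrm{g}(1 + \gamma\cos(\omega t))$, and the droplet mass and pressure are scaled so that the dimensionless ratio $M = m/(\rho\lambda_F^3)$ appears explicitly.

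Next I would substitute these replacements into each of the four field equations and into the trajectory equation in turn. In Laplace's equation, horizontal derivatives pick up $\lambda_F^{-2}$ while vertical ones pick up $h^{-2}$, so that multiplying through by $h^2$ yields (\ref{laplacian_1}) with coefficient $\mu^2 = (h/\lambda_F)^2$; the same anisotropy, applied to $\nabla H$, produces (\ref{bottom_1}). The kinematic condition on $\eta_t$ then forces the choice $\phi_0 = \eta_0 \lambda_F / T_F$, since this is exactly what makes the coefficient of $\tilde{\phi}_{\tilde{z}}$ collapse to $\lambda_F/h = 1/\mu$, giving (\ref{kinematical_condition_1}); the viscous term in the same equation contracts to $2/\mathrm{Re}$ with $\mathrm{Re} = \lambda_F^2/(\nu T_F)$. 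Substituting back into the dynamic condition for $\phi_t$, the gravity term collects into $G = \mathrm{g}T_F^2/\lambda_F$, the surface-tension term into $\mathrm{Bo} = \sigma T_F^2/(\rho\lambda_F^3)$, and the pressure term---after matching the scale for $P_D$ against the droplet weight---into the combination $GM/\rho$, producing (\ref{weird_1}). Applying the horizontal and temporal scalings to (\ref{trajnew}) finally yields (\ref{position_update}).

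The central subtlety, and the step I expect to require the most care, is the use of the kinematic condition to tie $\phi_0$ to $\eta_0$. Because the horizontal and vertical length scales differ, the anisotropy $\mu$ would otherwise appear with a different exponent in each equation; fixing the ratio $\phi_0 = \eta_0 \lambda_F/T_F$ is what produces the clean $1/\mu$ factor in (\ref{kinematical_condition_1}) and simultaneously guarantees that the remaining coefficients in the dynamic condition collapse into $G$, $\mathrm{Bo}$, $2/\mathrm{Re}$, and $GM/\rho$ without residual scale-dependent prefactors. Once this ratio is fixed and a consistent pressure scale for $P_D$ is chosen, the rest of the derivation is algebraic bookkeeping of standard weakly viscous quasipotential type, mirroring the one-dimensional reduction of Milewski \textit{et al.} \cite{milewski2015faraday} extended to both horizontal coordinates.
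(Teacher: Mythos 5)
Your proposal follows essentially the same route as the paper's proof: a direct substitution of characteristic scales (horizontal lengths by $\lambda_F$, vertical by $h$, time by $T_F$) into the dimensional field equations, with the anisotropy collapsing into $\mu$ and the remaining coefficients collecting into $G$, $\mathrm{Bo}$, $\mathrm{Re}$, and $M$. If anything, your version is more complete than the paper's, which stops with the kinematic-condition coefficient $t_F\phi_c/(\eta_c h)$ unevaluated and never explicitly fixes the potential scale $\phi_c = \eta_c\lambda_F/T_F$ needed to produce the $1/\mu$ factor, nor does it treat the trajectory equation (\ref{position_update}); your choice $\mathrm{Re} = \lambda_F^2/(\nu T_F)$ is also the one actually consistent with the $2/\mathrm{Re}$ coefficients in the theorem statement.
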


\begin{proof}[Proof.] We introduce the following nondimensionalization schema:
\begin{align*}
    \Bigg \{
    \begin{array}{ll}
      \til{x} = \ell x, \, \, \, \til{y} = \ell y, \, \, \, \til{z} = hy, \, \, \, \\
    \til{t} = t_F t, \, \, \, \til{\eta} = \eta_c \eta, \, \, \, \til{\phi} = \phi_c \phi,\til{H} = H_c H. 
    \end{array}
\end{align*}
 
where $H_c, \eta_c, \ell$ are constants and $h, t_F$ give the mean depth and Faraday period. We now apply this schema to equations (\ref{thm_3_first_first_eq}–\ref{eta_t_thm_3.1}). For equation (\ref{thm_3_first_first_eq}), we have:
\begin{align*}
    \frac{\ptl}{\ptl x} \left(\phi_c \frac{\ptl \til{\phi}}{\ptl \til{x}} \frac{1}{\ell} \right) + \frac{\ptl}{\ptl y} \left(\phi_c \frac{\ptl \til{\phi}}{\ptl \til{y}} \frac{1}{\ell} \right) + \frac{\ptl}{\ptl z} \left(\phi_c \frac{\ptl \til{\phi}}{\ptl \til{z}} \frac{1}{h} \right) = 0.
\end{align*}

Simplification and rearrangement in accordance with the horizontal Laplacian gives the following elliptic partial differential equation, with $\mu := h/\ell$ being a nondimensional parameter:
\begin{align}
\label{nondim-laplace}
    \mu^2 \Delta_{\til{H}}{\til{\phi}} + \frac{\ptl^2 \til{\phi}}{\ptl \til{z}^2} = 0.
\end{align}

Similarly, we may express the nondimensionalized form of equation (\ref{bottom_thm_3.1}) as follows:
\begin{align*}
    \phi_c \frac{\ptl \til{\phi}}{\ptl \til{z}} \frac{1}{z_c} - \phi_c  \frac{\ptl \til{\phi}}{\ptl \til{x}} \frac{1}{\ell} \cdot H_c \frac{\ptl \til{H}}{\ptl \til{x}} \frac{1}{\ell}  - \phi_c  \frac{\ptl \til{\phi}}{\ptl \til{y}} \frac{1}{\ell} \cdot H_c \frac{\ptl \til{H}}{\ptl \til{y}} \frac{1}{\ell} = 0.
\end{align*}
Applying our substitution for the nondimensional parameter $\mu$ gives the following:
\begin{align}
\label{nondim-bottom}
    \frac{\ptl \til{\phi}}{\ptl \til{z}} - \mu^2{\til{\phi}_{\til{x}} \til{H}_{\til{x}} + \til{\phi}_{\til{y}} \til{H}_{\til{y}}} = 0.
\end{align}

Next, we nondimensionalize the stress condition (\ref{phi_t_thm_3.1}) as follows:
\begin{align*}
    \phi_c \frac{\ptl \til{\phi}}{\ptl \til{t}} \frac{1}{t_F} &= -g(t) \eta_{c} \til{\eta} + \frac{\sigma}{\rho} \cdot \frac{\eta_c^2}{x_c^2} \left( \frac{\ptl^2 \til{\eta}}{\ptl \til{x}^2} + \frac{\ptl^2 \til{\eta}}{\ptl \til{y}^2} \right) \\
    & \, \, \, + 2 \nu \frac{\phi_c^2}{x_c^2} \Delta_{\til{H}}{\til{\phi}} - \frac{1}{\rho} P_D(x_c ({\M{\til{x}}} - {\M{\til{x}}_p(t)}), t_c \til{t}).
\end{align*}
After including the Reynolds number Re := $\frac{2 \nu t_F}{\lambda_F^2}$ for our fluid, we may write:
\begin{align}
\label{eqn33}
    \frac{\phi_c}{t_F} \frac{\ptl \til{\phi}}{\ptl \til{t}} &= -g(t) \eta_{c} \til{\eta} + \mathrm{Re}  \Delta_{\til{H}}{\til{\nu}} + 2 \nu \frac{\phi_c^2}{x_c^2} \Delta_{\til{H}}{\til{\phi}} - \frac{1}{\rho} P_D(x_c ({\M{\til{x}}} - {\M{\til{x}}_p(t)}), T_F \til{t}).
\end{align}

Finally, the analogue of the kinematic condition (\ref{eta_t_thm_3.1}) in the nondimensional case is:
\begin{align}
\label{eqn44}
    \til{\eta}_{\til{t}} = \frac{t_F \phi_c}{\eta_c h} \til{\phi}_{\til{z}} + \mathrm{Re} \nabla_{\til{H}}{\til{\eta}}.
\end{align}
Equations (\ref{nondim-laplace}-\ref{eqn44}) specify the nondimensionalized version of our system.
\end{proof}

\begin{figure}[!t]
    \centering
    \includegraphics[width=0.6\linewidth]{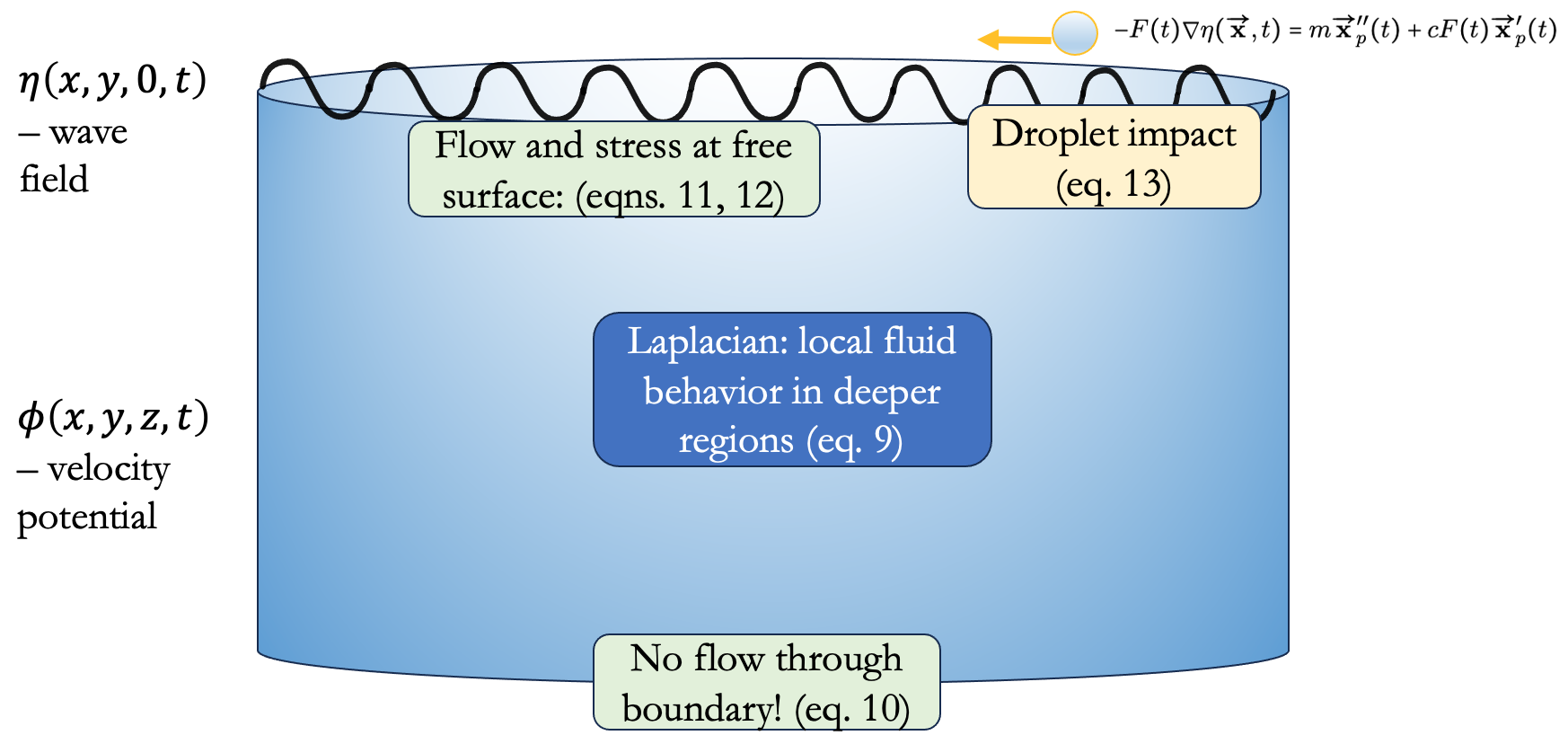}
    \caption{Schematic of the constraints on the elliptic PDE (\ref{laplacian_1}) in Theorem \ref{nondim_procedure}, which describe droplet dynamics on a fluid bath with variable bottom topography in three spatial dimensions. Equations (\ref{bottom_1}–\ref{kinematical_condition_1}) describe the conditions for fluid flow along the boundaries of the region, while Equation (\ref{position_update}) describes the droplet trajectory itself.}
    \label{fig-intuition}
\end{figure}

In order to solve the elliptic system of partial differential equations (PDE's) in Theorem \ref{nondim_procedure} at the free surface, we must characterize the vertical behavior of the velocity potential, namely $\phi_z$. Specifically, we require $\phi_z$ to evolve the wave-field $\eta$, using equation (\ref{kinematical_condition_1}), and the velocity potential, using equation (\ref{weird_1}), during droplet dynamics. In the following section, we approximate $\phi_z$ using a Dirichlet-to-Neumann (DtN) operator.

\subsection{Constructing a Three-Dimensional Dirichlet-to-Neumann Operator}

We seek to establish an operator, denoted DtN, that transforms \textit{Dirichlet boundary data} --- i.e. explicit values $q(\M{x}, 0)$ of the potential $\phi(\M{x}, z)$ on the free surface --- into \textit{Neumann boundary data} --- i.e., the values of $\phi_z(\M{x}, z)$. We first consider the case in which the bath has a constant depth. From the boundary values of $\phi$ on the free surface, we use Fourier analysis to explicitly determine the values of $\phi_z$ on the free surface, as follows.
\begin{theorem}
\label{flat_bottom_DtN}
    Assuming initial values $\phi(\M{x}, 0) = q(\M{x})$ on the free surface of a fluid bath and flat-bottom topography on the region $[0, L]^2$ for some $L \in \mathbb{R}$, we have:
    \begin{align}
    \phi_z(\M{x}, 0) = \mathrm{DtN}[q] = \sum_{k \in \Lambda^*} \exp(\mathrm{i} \M{k} \cdot \M{x}) \cdot \hat{q}(\M{k}) \mu k \tanh(\mu k),
\end{align}
where $\Lambda^*$ denotes the set of Fourier wavevectors $\mathbf{k}$ and $k = ||\M{k}||_2$.
\end{theorem}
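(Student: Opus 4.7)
The plan is to reduce the three-dimensional elliptic problem to a family of one-dimensional ODEs in $z$ via separation of variables, exploiting the flat-bottom geometry so that horizontal and vertical dependence decouple cleanly in Fourier space. By Theorem \ref{nondim_procedure}, $\phi$ satisfies $\mu^2 \Delta_H \phi + \phi_{zz} = 0$ inside the fluid, with $\phi(\M{x}, 0) = q(\M{x})$ on the free surface and (since $H_x = H_y = 0$ for a flat bath of nondimensional depth $1$) $\phi_z = 0$ at $z = -1$. This is a genuine boundary-value problem for an anisotropic Laplace equation on $[0,L]^2 \times [-1, 0]$, and the recovery of $\phi_z|_{z=0}$ from $\phi|_{z=0}$ is exactly what the DtN operator encodes.

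First I would expand the Dirichlet data in a Fourier series $q(\M{x}) = \sum_{\M{k} \in \Lambda^*} \hat{q}(\M{k}) \exp(i \M{k} \cdot \M{x})$ using periodicity over the box, and analogously postulate $\phi(\M{x}, z) = \sum_{\M{k} \in \Lambda^*} \Phi_{\M{k}}(z) \exp(i \M{k} \cdot \M{x})$. Substituting this ansatz into the elliptic PDE and using $\Delta_H \exp(i \M{k} \cdot \M{x}) = -k^2 \exp(i \M{k} \cdot \M{x})$ with $k = \|\M{k}\|_2$ gives the mode-by-mode ODE
\begin{align*}
\Phi_{\M{k}}''(z) = \mu^2 k^2 \, \Phi_{\M{k}}(z),
\end{align*}
whose general solution is $\Phi_{\M{k}}(z) = A_{\M{k}} \cosh(\mu k z) + B_{\M{k}} \sinh(\mu k z)$ for $k > 0$ (and $\Phi_{\mathbf{0}}(z) = A_{\mathbf{0}} + B_{\mathbf{0}} z$ in the zero mode).

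Next I would impose the two boundary conditions on each mode. The free-surface Dirichlet datum forces $A_{\M{k}} = \hat{q}(\M{k})$, while the bottom no-flux condition $\Phi_{\M{k}}'(-1) = 0$ gives $-A_{\M{k}} \sinh(\mu k) + B_{\M{k}} \cosh(\mu k) = 0$, hence $B_{\M{k}} = \hat{q}(\M{k}) \tanh(\mu k)$. Differentiating in $z$ and evaluating at $z = 0$ yields $\Phi_{\M{k}}'(0) = \mu k B_{\M{k}} = \mu k \tanh(\mu k) \, \hat{q}(\M{k})$ (and $\Phi_{\mathbf{0}}'(0) = 0$, consistent with the $k \to 0$ limit of the formula). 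Summing over $\M{k} \in \Lambda^*$ then reproduces $\phi_z(\M{x}, 0) = \sum_{\M{k}} \mu k \tanh(\mu k) \, \hat{q}(\M{k}) \exp(i \M{k} \cdot \M{x})$, which is the claimed DtN expression.

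The main obstacle is not the algebra but the functional-analytic justification: one must verify that the modal series converges in a sense strong enough to legitimize term-by-term differentiation in both $\M{x}$ and $z$, and that the horizontal boundary of the box $[0,L]^2$ is handled consistently with the use of Fourier exponentials (either by treating $\Lambda^*$ as the dual lattice of an $L$-periodic torus, or by replacing the exponentials with the appropriate sine/cosine basis adapted to the lateral wall conditions). Since $\mu k \tanh(\mu k)$ grows only linearly in $k$, the series for $\phi_z$ converges whenever $\hat{q}(\M{k})$ decays slightly faster than $|\M{k}|^{-1}$, so choosing $q$ in, say, $H^s$ for $s$ large enough makes the argument rigorous; the rest is the short calculation above.
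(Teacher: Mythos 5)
Your proposal is correct and follows essentially the same route as the paper: a Fourier decomposition in the horizontal variables reduces the elliptic equation $\mu^2\Delta_H\phi+\phi_{zz}=0$ to the mode-wise ODE $\hat{\phi}_{zz}=\mu^2k^2\hat{\phi}$, whose hyperbolic solutions are pinned down by the surface Dirichlet datum and the bottom no-flux condition, yielding the symbol $\mu k\tanh(\mu k)$. Your write-up is in fact a bit more careful than the paper's (you treat the zero mode separately and flag the convergence and periodicity issues, and you avoid the paper's stray $\|\M{k}\|_2^2$ in the hyperbolic arguments), but the underlying argument is the same.
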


\begin{proof}[Proof.]
We will solve the elliptic PDE (\ref{laplacian_1}) by a Fourier transform method, which allows us to determine the potential $\phi$ and its partial derivative $\phi_z$ in Fourier space.

Since equation (\ref{laplacian_1}) is autonomous in $t$, we temporarily omit consideration of temporal behavior and perform an two-dimensional Fourier transform $\mathcal{F}$ of both sides of the equation. We may express the transformed differential equation as follows\footnote{We define the two-dimensional Fourier transform as follows: if $f \in L^1(\R)$, where $L^1$ denotes Lebesgue space, then we write $\mathcal{F}[f](\M{k}) := \int_{\R} \int_{\R} \exp(2 \pi i \M{k} \cdot \M{x}) f(\M{x}) \,  \mathrm{d}\M{x}_1 \, \mathrm{d}\M{x}_2$.}:
\begin{align}
\label{int_const_depth_1}
    \mathcal{F}[\mu^2 \Delta_H \phi + \phi_{zz}] = \mathcal{F}[0] = 0.
\end{align}
By linearity of the Fourier transform, we may expand the left-hand side as $\mu^2 (\mathcal{F}[\phi_{xx}] + \mathcal{F}[\phi_{yy}]) + \mathcal{F}[\phi_{zz}]$. To simplify, we note that, for a differentiable function $\alpha \in L^1(\R)$, we have that $\mathcal{F}[D \alpha](\M{k}) = 2 \pi i \M{k} \mathcal{F}[\alpha]$, where $\M{k}$ is an element of Fourier space and $D$ is the derivative operator. Applying this result twice to the left-hand side of equation (\ref{int_const_depth_1}) gives:
\begin{align*}
    \mu^2 (\mathcal{F}[\phi_{xx}] + \mathcal{F}[\phi_{yy}]) + \mathcal{F}[\phi_{zz}] = -\mu^2 \left(\left(2 \pi i \M{k} \right)^2 \hat{\phi} + \left(2 \pi i \M{k} \right) ^2  \hat{\phi} \right) + \hat{\phi}_{zz} = - \mu^2 ||\M{k}||_2^2  \hat{\phi} + \hat{\phi}_{zz}.
\end{align*}
Our transformed ordinary differential equation (ODE) is autonomous in $x$ and $y$; therefore, we have reduced a PDE in three spatial variables into an ODE in $z$. Solving this equation by standard methods for second-order, linear ODE's gives:
\begin{align}
\label{guess_velocity_potential_const_depth}
    \hat{\phi}(\M{k}, z) = A(\M{k}) \cosh(\mu ||\M{k}||_2^2 \cdot  z) + B(\M{k}) \sinh(\mu ||\M{k}||_2^2 \cdot  z),
\end{align}
where $A(\M{k}), B(\M{k})$ are undetermined coefficients. We now use our boundary conditions to determine the coefficients $A(\M{k}), B(\M{k})$. Since the fluid depth $H(\M{x})$ is a constant $h_0$, equation (\ref{bottom_1}) gives $\phi_z = 0$ at $z = -h_0$. Substituting this result into equation (\ref{guess_velocity_potential_const_depth}) yields:
\begin{align}
\label{int_const_depth_2}
     A(\M{k}) \sinh(\mu k h_0) = B(\M{k}) \cosh(\mu k h_0).
\end{align}
If we assume knowledge of Dirichlet data $\hat{q}(\M{k})$ at the free surface, then we may evaluate the left-hand side of equation (\ref{guess_velocity_potential_const_depth}) at $z = 0$ to describe the coefficients of equation (\ref{int_const_depth_2}):
\begin{align}
\label{int_const_depth_3}
    A(\M{k}) = \hat{q}(\M{k}), \\
\label{int_const_depth_4}
     B(\M{k}) = \frac{\hat{q}(\M{k}) \cdot \sinh(\mu k h_0)}{\cosh(\mu k h_0)}.
\end{align}
Therefore, we may derive an explicit expression for $\hat{\phi}(\M{k}, z)$ from equations (\ref{guess_velocity_potential_const_depth}), (\ref{int_const_depth_3}), and (\ref{int_const_depth_4}), as follows:
\begin{align}
\label{vel_pot_almost}
    \hat{\phi}(\M{k}, z) &= \hat{q}(\M{k}) \cosh(-\mu k h_0) - \frac{\hat{q}(\M{k}) \sinh(-\mu k h_0)}{\cosh(-\mu k h_0)} 
= \hat{q}(\M{k}) \frac{\cosh(\mu k (-h_0 + 1) )}{\cosh(\mu k)}.
\end{align}
Finally, computing a derivative with respect to $z$ and performing an inverse Fourier transform gives a physical characterization of the Dirichlet-to-Neumann operator for the constant depth case. We specify the resulting DtN operator as:
\begin{align*}
    \text{DtN}[q] := \phi_z(\M{x}, 0) = \sum_{\M{k} \in \Lambda^*} \exp(i \M{k} \cdot \M{x}) \cdot \hat{q}(\M{k}) \mu k \tanh(\mu k),
\end{align*}
where $\Lambda^* = \frac{2\pi}{L} \Z \times \frac{2\pi}{L} \Z - \{0\}$ is the set of nonzero wave-vectors $\M{k}$ in Fourier space. 
\end{proof}
    From Theorem \ref{flat_bottom_DtN}, it is also possible to derive an equation for the velocity potential through a direct application of the Fourier transform, rather than differentiating with respect to $z$. Doing so gives the following equation for the velocity potential $\phi(x, y, z)$:
\begin{align}
\label{vel_pot}
    \phi(x, y, z) = \hat{q}(0) + \sum_{\M{k} \in \Lambda^*} \exp(i \M{k} \cdot \M{x}) \hat{q}(\M{k}) \frac{\cosh(\mu k (z + 1) )}{\cosh(\mu k)}.
\end{align}

We now consider the general case involving variable bottom topography of the fluid bath. Following the incorporation of the bottom topography by Milewski \cite{milewski1998formulation}, we add an extra term to the velocity potential of equation (\ref{vel_pot}) due to variations in the depth $H(\M{x})$:
\begin{align}
\label{3d-dtn-3.1}
      \phi(x, y, z) = \hat{q}(0) + \sum_{\M{k} \in \Lambda^*} \exp(i \M{k} \cdot \M{x}) \left[ \hat{q}(\M{k}) \frac{\cosh(\mu k (z + 1) )}{\cosh(\mu k)} + X_{\M{k}} \frac{\sinh(\mu kz)}{k \cosh^2(\mu k)} \right],
\end{align}
where the sequence of \textit{topographic coefficients} $(X_{\M{k}})_{\M{k} \in \Lambda^*}$ characterize the bottom topography of the fluid bath. This expression is also contained within Milewski \cite{milewski1998formulation}. We may relate the unknown coefficients $(X_{\M{k}})$ to the known Dirichlet data values $q(x, y)$ of the potential on the free surface by imposing the boundary condition in equation (\ref{bottom_1}). Equating the resulting terms gives the following:
\begin{align}
\label{3d-dtn-3.2}
    \sum_{\M{k} \in \Lambda^*} \hat{q}(\M{k}) \nabla \cdot \left[ \mathrm{e}^{i \M{k} \cdot \M{x}}  \frac{\sinh(\mu k H(\M{x}) )}{\cosh(\mu k)} \frac{\M{k}}{k} \right] &= \sum_{\M{k} \in \Lambda^*} X_{\M{k}} \nabla \cdot \left[ \mathrm{e}^{i \M{k} \cdot \M{x}} \frac{\cosh(\mu k (1 + H(\M{x}) ))}{\cosh^2(\mu k)} \frac{\M{k}}{k^2} \right],
\end{align}
Note that the term on the left-hand side of the above equation originates from the formulation of the Dirichlet-to-Neumann operator in the flat-bottom case, whereas the term on the right-hand side involves the variable topography term.

Once the topographical coefficients $X_{\M{k}}$ have been calculated from equation (\ref{3d-dtn-3.2}), we may use these coefficients to construct the full Dirichlet-to-Neumann (DtN) operator in three dimensions by differentiating equation (\ref{3d-dtn-3.1}) with respect to $z$:
\begin{align}
\label{3d-dtn-3.3}
     \text{DtN}[q] := \phi_z(\M{x}, 0) = \sum_{k \in \Lambda^*} \exp(i \M{k} \cdot \M{x}) \cdot  \left[ \hat{q}(\M{k}) \cdot \mu k \tanh(\mu k) + \mu X_{\M{k}}  \cdot \text{sech}^2(\mu k) \right].
\end{align}

Therefore, we may derive all information regarding the behavior of the fluid \textit{throughout} our system solely from the \textit{surface} boundary values, $q$, and the topographic constraints of the bath, encapsulated by $X_{\M{k}}$. We address the calculation of the topographical coefficients by means of equation (\ref{3d-dtn-3.2}) in the following section.

\subsection{Calculation of Topographical Coefficients through Galerkin Method.}

To obtain the values of the coefficients $X_{\M{k}}$ from the formulation presented in equation (\ref{3d-dtn-3.2}), we establish two operators to characterize the left- and right-hand sides, following the methodology of Andrade \textit{et al.}\ \cite{andrade2018three}. We define the action of the operators $A$ and $B$ on the sequences $\hat{q}(\M{k})$ and $(X_{\M{k}})$ as follows:
\begin{align*}
    A[\hat{q}(\M{k})] &:= \sum_{\M{k} \in \Lambda^*} \hat{q}(\M{k}) \nabla \cdot \left[ \mathrm{e}^{i \M{k} \cdot \M{x}}  \frac{\sinh(\mu k H(\M{x}) )}{\cosh(\mu k)} \frac{\M{k}}{k} \right], \\
    B[X_{\M{k}}] &= \sum_{\M{k} \in \Lambda^*} X_{\M{k}} \nabla \cdot \left[ \mathrm{e}^{i \M{k} \cdot \M{x}} \frac{\cosh(\mu k (1 + H(\M{x}) ))}{\cosh^2(\mu k)} \frac{\M{k}}{k^2} \right].
\end{align*}
Then, equation (\ref{3d-dtn-3.2}) amounts to solving $A[\hat{q}({\M{k}})] = B[X_{\M{k}}]$ for the coefficients $X_{\M{k}}$. Consider a truncated sum of the coefficients in the expansion of $B[X_{\M{k}}]$, which bounds the magnitude $k$ of a wave-vector $\M{k}$ using the Galerkin parameter $M$: 
\begin{align*}
    B_M [X_{\M{k}}] = \sum_{\substack{k \leq M \\ k \in \Lambda^*}} X_{\M{k}} \nabla \cdot \left[ \mathrm{e}^{i \M{k} \cdot \M{x}} \frac{\cosh(\mu k (1 + H(\M{x}) ))}{\cosh^2(\mu k)} \frac{\M{k}}{k^2} \right].
\end{align*}

Solving our system amounts to minimizing the residual error in our estimate, which is $R[X_{\M{k}}] := B_M [X_{\M{k}}] - A[\hat{q}({\M{k}})]$. If the residual itself is orthogonal to the vector space of Fourier vectors $V_M := \text{span}\{\exp(i \M{k} \cdot \M{x}) : k \leq M\}$, then we have minimized the error in our approximation. Therefore, we must impose the condition\footnote{The notation $\langle f, g \rangle$, for integrable functions $f, g$, denotes the inner product of the functions $f, g$ in Fourier space.}
\begin{align}
\label{last-eqn}
    \langle B_M[X_{\M{k}}],\mathrm{e}^{i \M{w} \cdot \M{x}} \rangle = \langle A[\hat{q}({\M{k}})],\mathrm{e}^{i \M{w} \cdot \M{x}} \rangle,
\end{align} 
for some basis vector $\mathrm{e}^{i \M{w} \cdot \M{x}} \in V_M$. From the system of ordinary differential equations (ODE's) imposed by (\ref{last-eqn}), we may deduce the values of the coefficients $X_{\M{k}}$ numerically, which leads to a full solution of equation (\ref{3d-dtn-3.3}) for the behavior of the potential $\phi_z$ across the free surface. Aided by this description of $\phi_z$, we may use equations (\ref{weird_1}) and (\ref{kinematical_condition_1}) to specify the complete time-evolution of our system. Hence, we have effectively reduced a three-dimensional elliptic PDE system to a set of tractable, one-dimensional ODE's. We apply the resulting DtN map in the numerical implementation of our droplet model in Section §\ref{sec4}.

\section{Numerical Simulations}
\label{sec4}
We utilize the theoretical framework established in Section §\ref{sec3} to conduct three-dimensional simulations of walking droplet dynamics in MATLAB over variable-topography systems with limited computational expense. 
\subsection{Methodology and Simulation Framework}
\label{sec4.1}
We evolve the underlying dynamical system by direct time marching of the wave-field $\eta$, velocity potential $\phi$, and droplet position $\M{x}_p$, according to the description established in Theorem \ref{nondim_procedure}.

\begin{figure}[!htb]
\begin{subfigure}[t]{0.5\textwidth}
\centering
  \includegraphics[width = \textwidth]{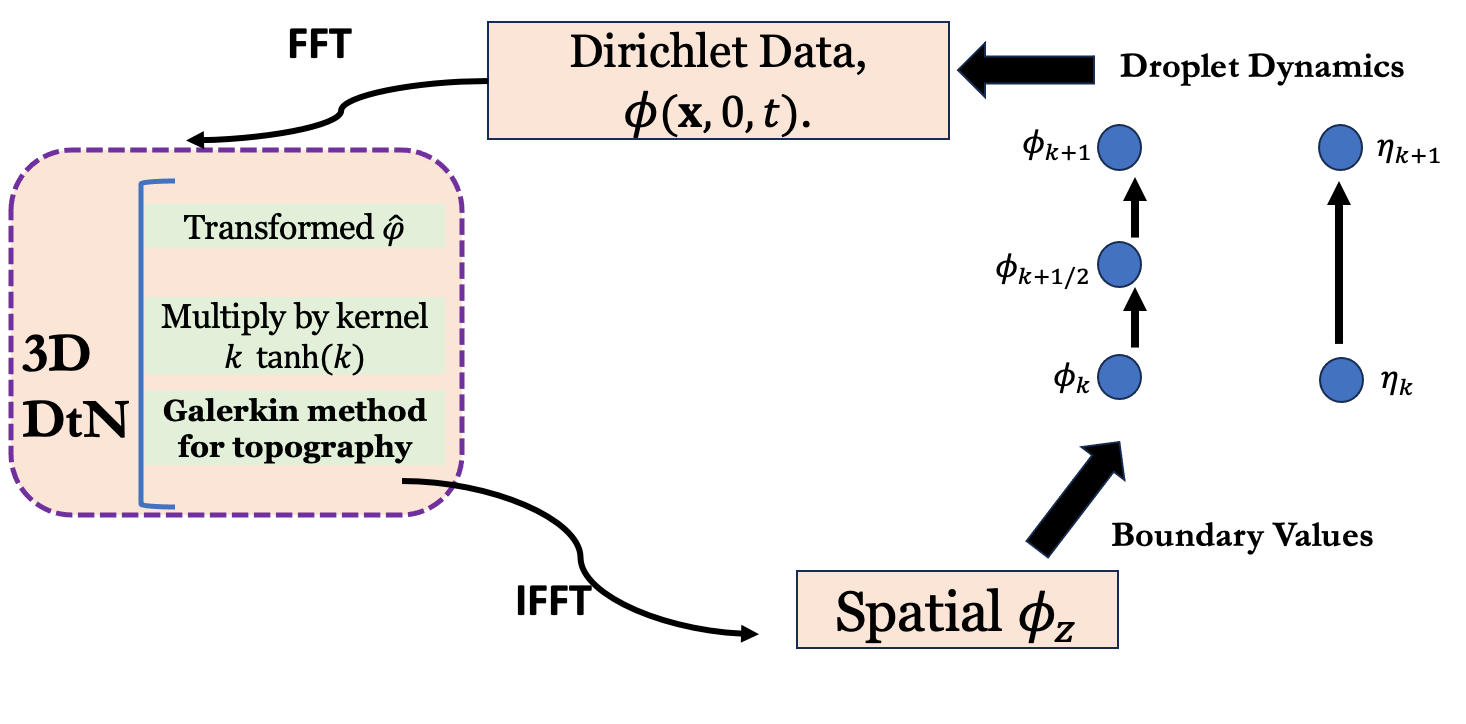}
\caption{Central-difference method: workflow. \label{centraldifference}
}
\end{subfigure}\hfill
\begin{subfigure}[t]{0.5\textwidth}
\centering
    \includegraphics[width = \textwidth]{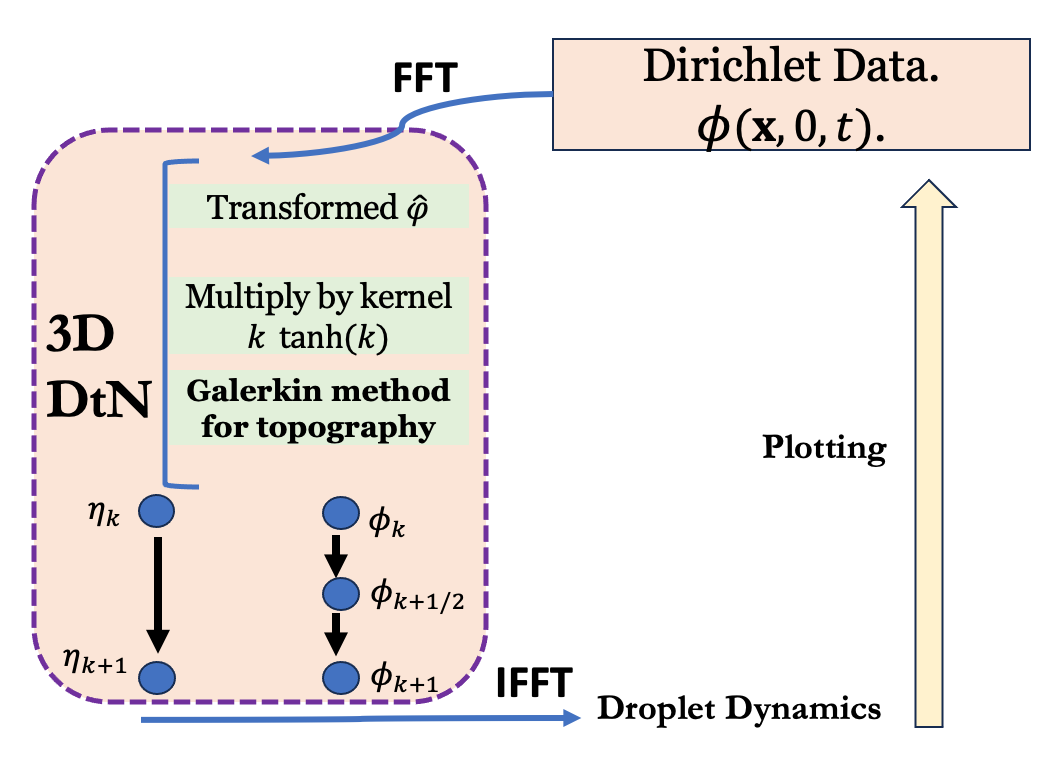}
\caption{Fourier pseudo-spectral method: workflow. \label{pseudospectral}
}
\end{subfigure}
    \caption{Workflow for analysis of droplet dynamics. In Figure \ref{centraldifference}, we demonstrate central-difference schema workflow, which requires frequent Fourier transforms (FFT and IFFT denote the Fast Fourier Transform and Inverse Fast Fourier Transform, respectively) The Fourier pseudo-spectral method in Figure \ref{pseudospectral} provides a more streamlined approach. Created by student researcher.}
    \label{fig6}
\end{figure}

The central feature of our numerical model is the usage of the Dirichlet-to-Neumann operator formulation in equation (\ref{3d-dtn-3.3}) to evolve the wave-field in equation (\ref{eta_t_thm_3.1}). In Figure \ref{fig6}, we provide a schematic of the efficiency of our surrogate three-dimensional DtN map, which reduces the computational complexity of our implementation. We note, however, that despite the simplified description given by Theorem \ref{nondim_procedure}, there remains to be specification of the droplet pressure term $P_D$ and the computation of the second partial derivatives of our potential and wave-field. We now address both of these issues. 

First, we must explicitly designate the droplet pressure term $P_D(\M{x} - \M{x}_p, t)$ analytically. Approximating the droplet trajectory by a standard linear spring model, which has been supported from both experimental \cite{molavcek2013dropsb} and analytical \cite{nachbin2017tunneling} standpoints, we assume instantaneous contact and, therefore, set $P_D(\M{x} - \M{x}_p, t) = F(t) \delta(\M{x} - \M{x}_p)$, where $\delta$ is the Dirac delta distribution and $F(t)$ is a forcing term given by
\begin{align}
\label{pressure_term}
    F(t) := \frac{8 \pi^2}{\omega T_F} \cdot \sin(\frac{4 \pi \tau(t)}{T_F}) \delta(\M{x} - \M{x}_p).
\end{align}
In the above formulation, we take the periodic indicator $\tau(t)$ to be given by $\tau(t) := t \pmod{T_F}$ for $0 \leq \tau(t) < T_F/4$; in effect, we take the droplet to be in contact with the wave-field for one-fourth of a Faraday period and assume that droplet dynamics are decoupled from the wave-field for the other three-fourths of the period --- an approximation that is viable for most tunneling scenarios under fluctuation of oscillatory periods.

Second, we compute the second partial derivatives of $\phi, \eta$ using two distinct approaches: (i) a second-order central difference method for computation of the relevant horizontal Laplacian operators, and (ii) a Fourier pseudo-spectral method that avoids the calculation of second derivatives in estimating the Laplacian.

The central difference method (Figure \ref{centraldifference}) involves simply evolving the equations in Theorem \ref{nondim_procedure} using the Dirichlet-to-Neumann operator and estimating the values of $\Delta_{H}\eta, \Delta_H{\phi}$ using a second-order central difference approximation on a two-dimensional mesh. However, this approach requires applying a Fourier inversion map during each time-evolution step. We may simplify the computational procedure by instead evolving our equations through a pseudo-spectral method (Figure \ref{pseudospectral}). To accomplish this reduction, we must implement an effective method to evolve the equations in Theorem \ref{nondim_procedure} in Fourier space.


We formulate our pseudo-spectral method by applying Fourier transforms to the restrictions in Theorem \ref{nondim_procedure}, which produces the following nondimensionalized evolution equations for the wave-field, velocity potential, and droplet velocity involving Fourier wave-vectors $\M{k}$:
\begin{align}
\label{ps_eq_1}
    \hat{\phi}_t &= -G(1 + \gamma \cos(4 \pi t)) \hat{\eta} - 2 \frac{||\M{k}||_2^2}{\text{Re}} \cdot  \hat{\phi} - ||\M{k}||_2^2 \cdot \text{Bo} \cdot \hat{\eta} - GM F(t) \exp(\mathrm{i} \M{k} \cdot \M{x}_p), \\
\label{ps_eq_2}
    \hat{\eta}_t &= -\frac{1}{\mu}\hat{\phi}_z - \frac{2}{\text{Re}} ||\M{k}||_2^2 \hat{\eta}, \\
\label{ps_eq_3}
    \M{v}_t &= \frac{-8 T_F \mathrm{g} \pi^2}{\lambda_F \omega} \nabla \eta - \frac{8 c g \pi^2}{\omega} \sin\left(\frac{4\pi \tau(t)}{T_F}\right) \M{v}.
\end{align}

\begin{algorithm}[!t]
\caption{Condensed Description of Pseudo-Spectral Evolution Methodology}\label{alg1}
\hspace*{\algorithmicindent} \textbf{Input:} Parameters ($\gamma, \text{Re}, \text{Bo}, G, M, c, T_F, \lambda_F, \omega, {\Delta x}, {\Delta y},\Delta t$, $\Delta t'$); Fourier wave-vectors $\M{k}$; as well as initial wave-field, $\eta$,  velocity potential, $\phi$, and droplet trajectory, $\M{x}_p, \M{v}$.\\
 \hspace*{\algorithmicindent} \textbf{Output:} Final wave-field $\eta$ and droplet position $\M{x}_p$.
\begin{algorithmic}[1]
    \State{$\hat{\phi}, \hat{\eta} \gets \text{FFT}(\phi), \text{FFT}(\eta)$} \Comment{Fourier Transform of initial values.}
    \While{$t \leq t_{\text{max}}$}
        \While{$t' \leq 1$} \Comment{Smaller time-stepping within each Faraday period.}
           \If{$t' \leq 1/4$} \Comment{Droplet is in contact with bath.}
           \State{$\hat{\phi} \gets \hat{\phi} + \frac{\Delta t}{2} \cdot \text{Evolve}(\hat{\phi})$} \Comment{Half step in $\hat{\phi}$ using eq.\ (\ref{ps_eq_1}).}
           \State{$\M{v} \gets \M{v} + \frac{\Delta t}{2} \cdot \text{Evolve}(\M{v})$}\Comment{Update $\M{v}$ using eq.\ (\ref{ps_eq_3}).}
            \State{$\widehat{\phi_z} \gets \text{DtN}[\hat{\phi}]$}\Comment{Implement DtN operator}
            \State{$\M{x}_p \gets \M{x}_p + \Delta t \cdot \M{v}$}\Comment{Update droplet position.}
            \State{Repeat lines 8 – 9 for half-step in $\phi$.}
            \Else \Comment{Droplet is not in contact with bath.}
            \State{Execute only lines 8, 10 – 12, 13.}\Comment{Velocity is invariant while not contact.}   
           \EndIf
            
        \State{$t' \gets t + \Delta t'$}\Comment{Increment smaller time-step.}
        \EndWhile
   
    \State{$t \gets t + \Delta t$}\Comment{Move to next droplet bounce.}
    \State{$\eta \gets \text{IFFT}(\hat{\eta})$; plot $\eta$ and $\M{x}_p$.}
    \EndWhile
   
\end{algorithmic}
\end{algorithm}

Accordingly, we evolve the details of the wave profile using equations (\ref{ps_eq_1}–\ref{ps_eq_3}). We must, however, evolve the droplet trajectory of equation (\ref{ps_eq_3}) in physical space, so a single Fourier inversion in each Faraday period is required (see Algorithm \ref{alg1} for further details).

In both the central difference and pseudo-spectral methods, we utilize either a leapfrog or fourth-order Runge-Kutta (RK4) method to evolve the potential $\phi$ and the wave-field $\eta$ so as to ensure numerical accuracy and convergence \cite{shampine2009stability, butcher1996history}. We also ensure that both methods exhibit stability by imposing the Courant–Friedrichs–Lewy condition with respect to temporal and spatial discretization \cite{courant1928partiellen}.
\subsection{Numerical Results}

\begin{figure}[!t]
     \centering
    \begin{subfigure}[b]{0.49\textwidth}
        \centering
        \includegraphics[height=1.2in]{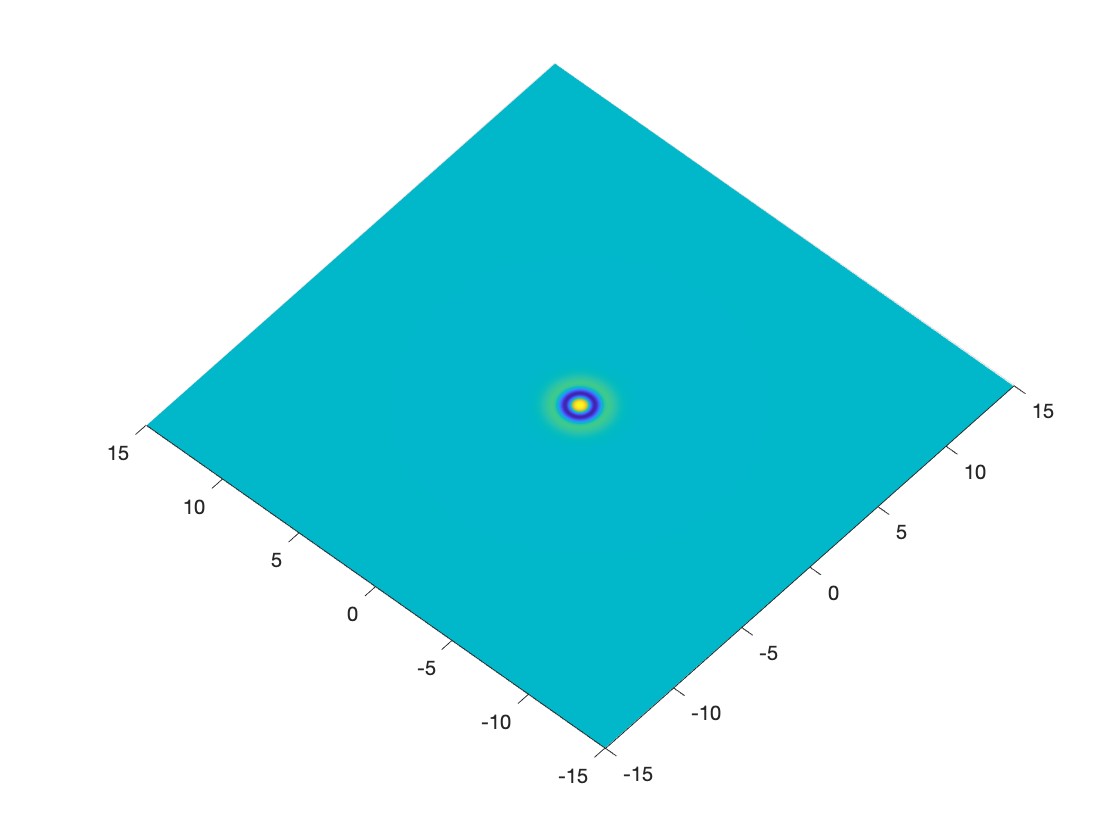}
        \caption{Initial static wave profile, shown prior to droplet impact.}
    \end{subfigure}%
    ~ 
    \begin{subfigure}[b]{0.49\textwidth}
        \centering
        \includegraphics[height=1.2in]{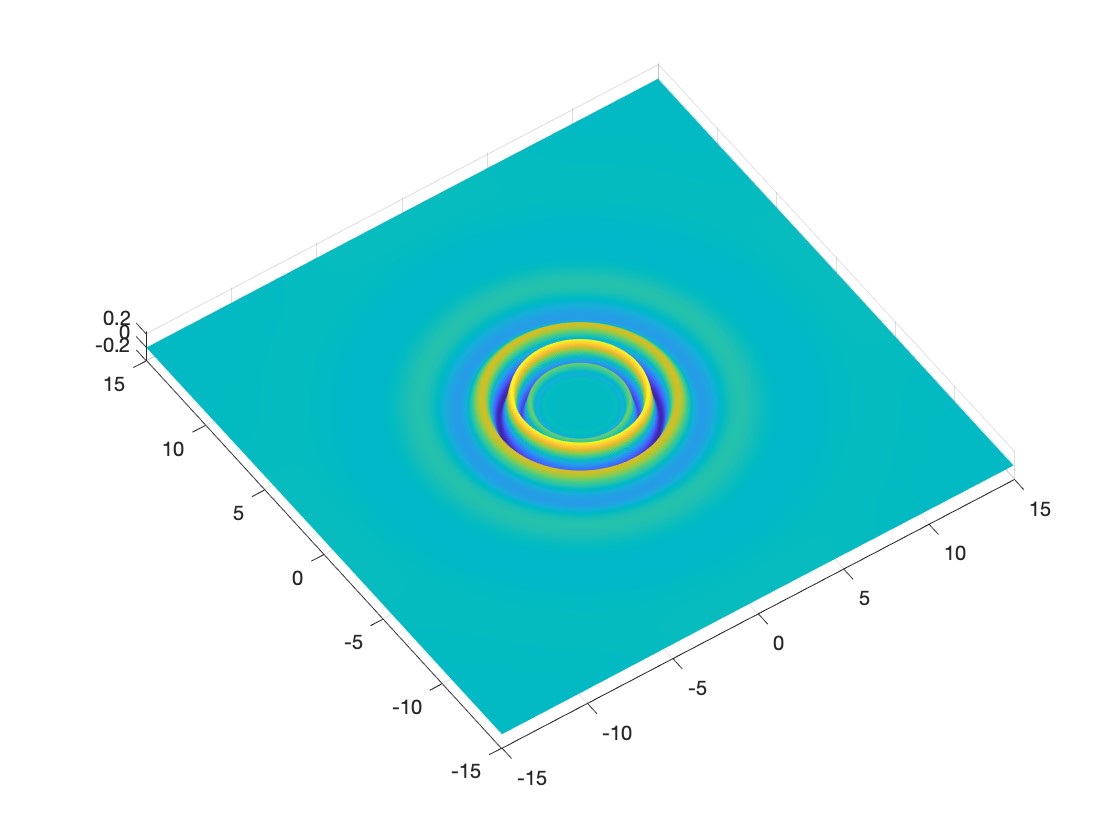}
        \caption{Wave profile, three Faraday periods after droplet impact.}
    \end{subfigure}
    
    \begin{subfigure}[b]{0.49\textwidth}
        \centering
        \includegraphics[height=1.5in]{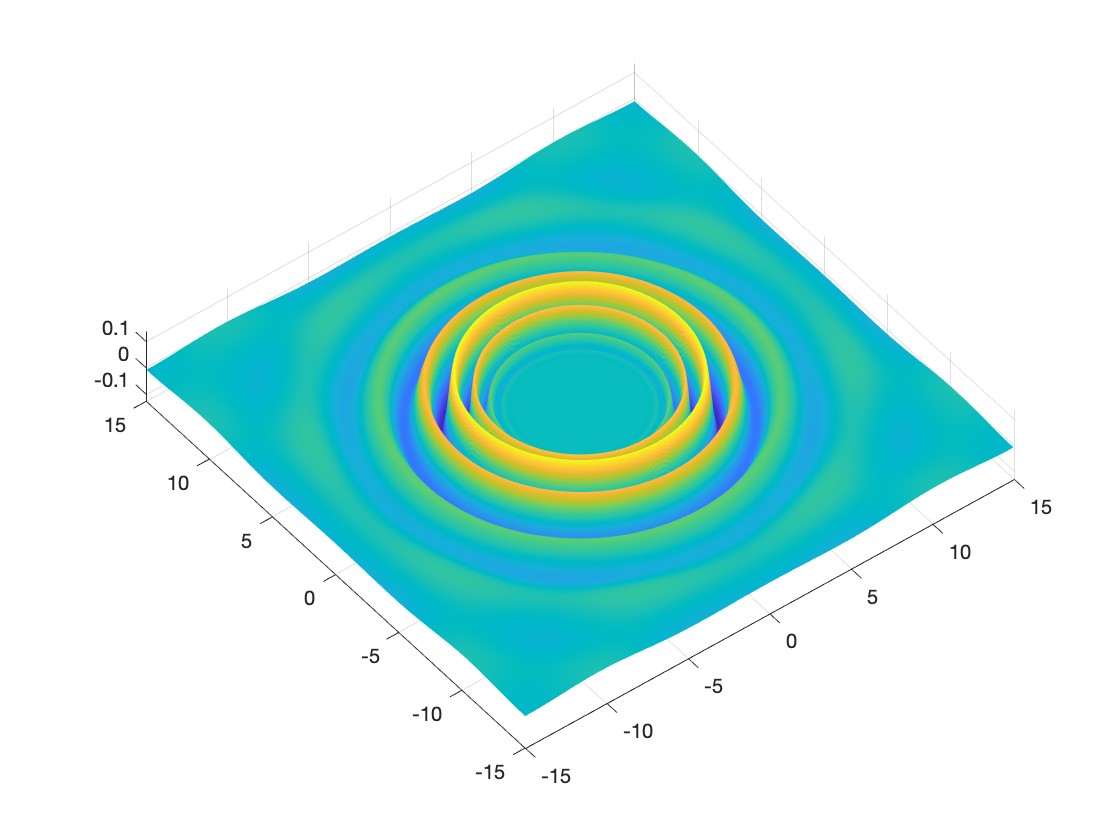}
        \caption{Wave profile, eight Faraday periods after droplet impact.  \label{fig:one_impact_c}}
    \end{subfigure}%
    ~
    \begin{subfigure}[b]{0.49\textwidth}
        \centering
        \includegraphics[height=1.5in]{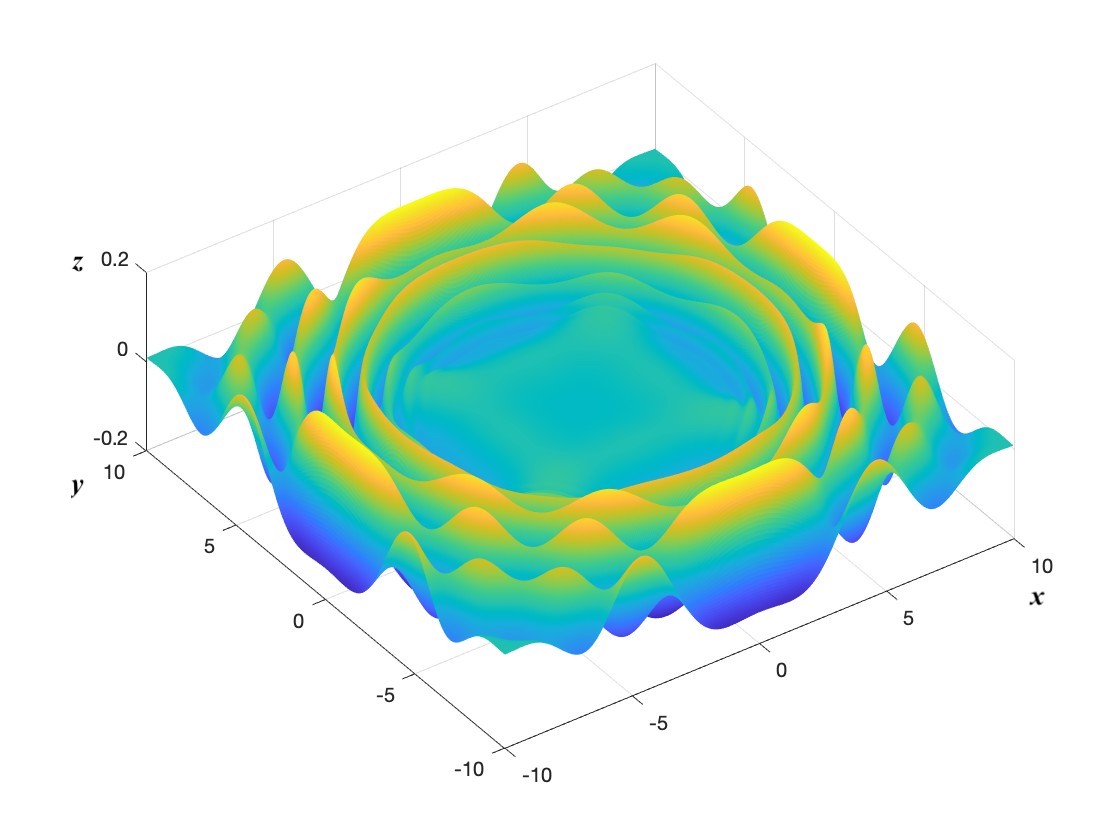}
        \caption{Wave profile, twenty Faraday periods after droplet impact.  \label{fig:one_impact_d}}
    \end{subfigure}
    \caption{A numerical model of the evolution of a droplet wave profile after a single droplet bounce at $(0,0,0)$, seen from a bird's eye view. The impact of a single droplet leads to the formation of surface waves. The wave propagates outward from its initial impact, displaying sharp parallels in overall form and scale to experimental results. The axis scaling is with respect to the Faraday wavelength, $\lambda_F$. Created by student researcher.}
    \label{fig:one_impact}
\end{figure}

We developed a MATLAB-based simulation library from scratch to implement the model described in Sections \ref{sec3} and \ref{sec4.1}. Our model is highly versatile and accurate in comparison to experiment and current, state-of-the-art models. We performed simulations of droplet dynamics for both the constant-bottom and variable-topography cases; fine-tuned the values of the forcing acceleration $\gamma$ to obtain both bouncing droplets (Figure \ref{fig:one_impact}, Figures \ref{fig7a}-\ref{fig7c}) and walking droplets (Figures \ref{fig7c}-\ref{fig7i}); and optimized the size of Fourier wave-vector bases.

In Figure \ref{fig:one_impact}, we demonstrate the impact of a droplet bouncing in a bath of constant depth. The radially symmetric surface waves of Figures \ref{fig:one_impact_c} and \ref{fig:one_impact_d} directly parallel the observations of experiment (such as in Figure \ref{fig-wave-field-for-intro}); additionally, the oscillatory decay rate of each wave in the radial direction occurs with an initial wavelength of $\lambda_F$, as evidenced in experimental conditions. 

\begin{figure}[!t]
    \centering
\begin{subfigure}[t]{0.33\textwidth}
\centering
  \includegraphics[width=\textwidth]{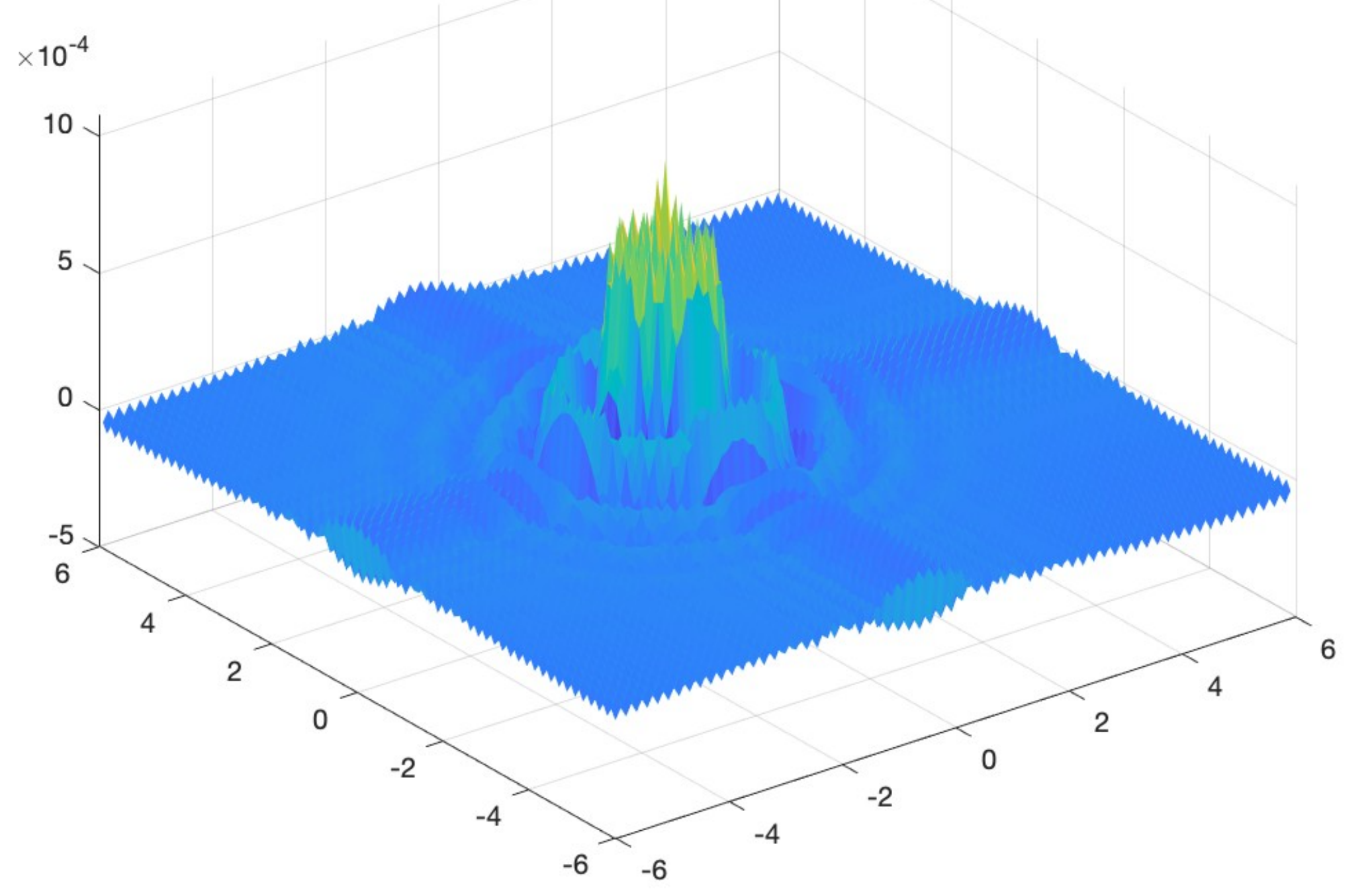}
\caption{Previous pilot-wave model. \label{fig7a}}
\end{subfigure}\hfill
\begin{subfigure}[t]{0.33\textwidth}
\centering
    \includegraphics[width=\textwidth]{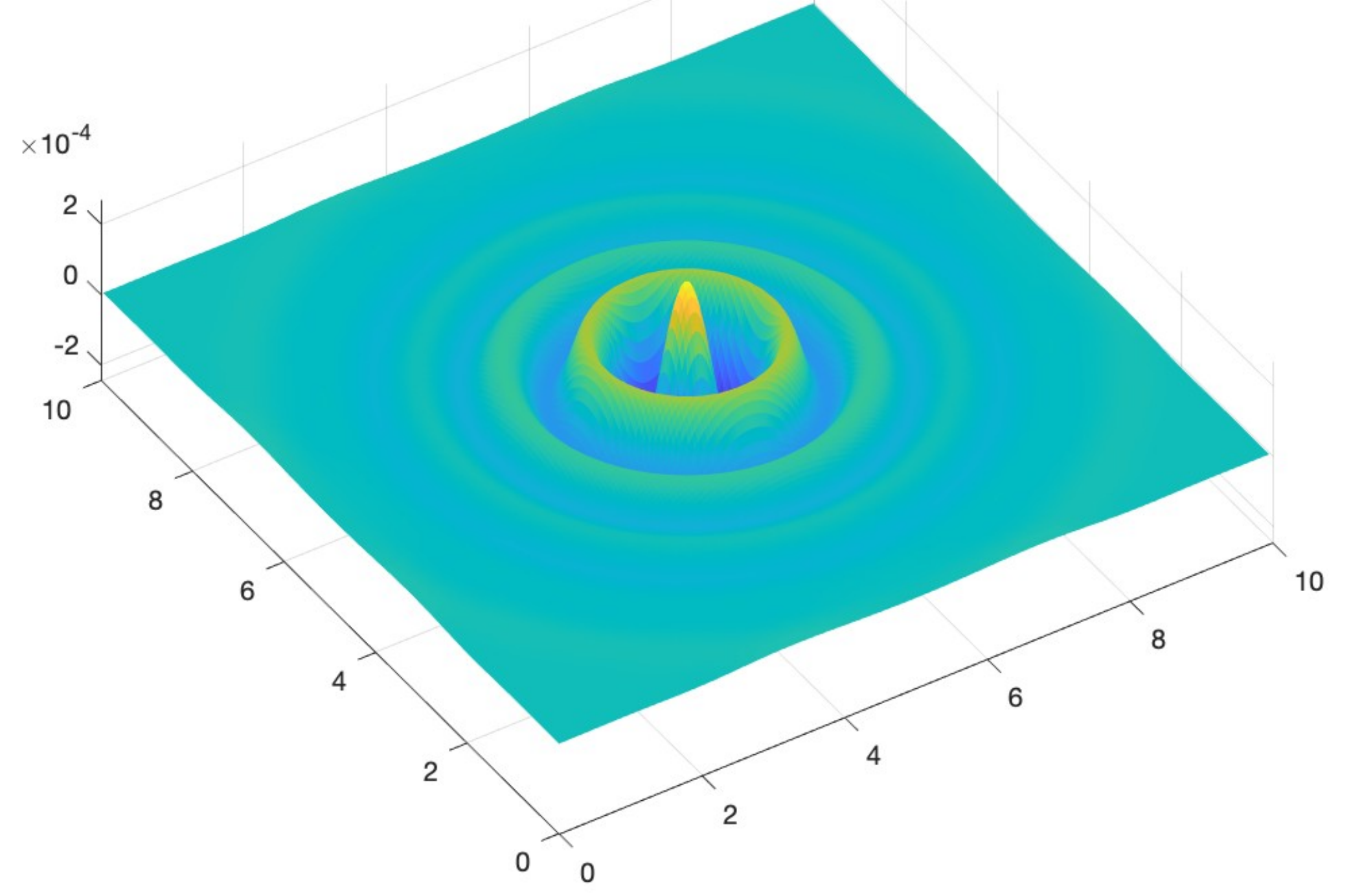}
\caption{Pseudo-spectral method.}
\end{subfigure}\hfill
\begin{subfigure}[t]{0.33\textwidth}
\centering
    \includegraphics[width=\textwidth]{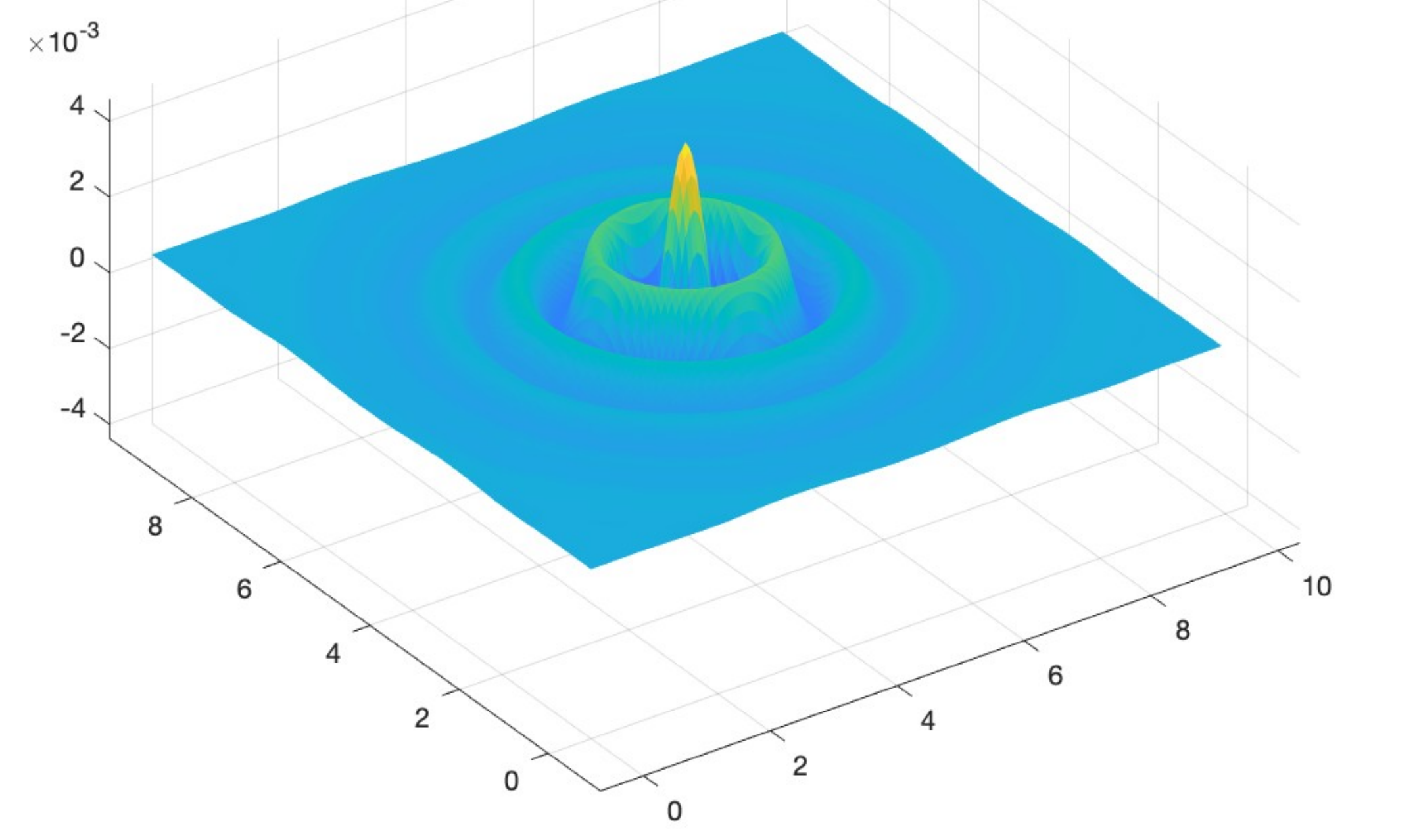}
\caption{Second-order central difference. \label{fig7c}}
\end{subfigure}\hfill

\begin{subfigure}[t]{0.33\textwidth}
\centering
  \includegraphics[width=\textwidth]{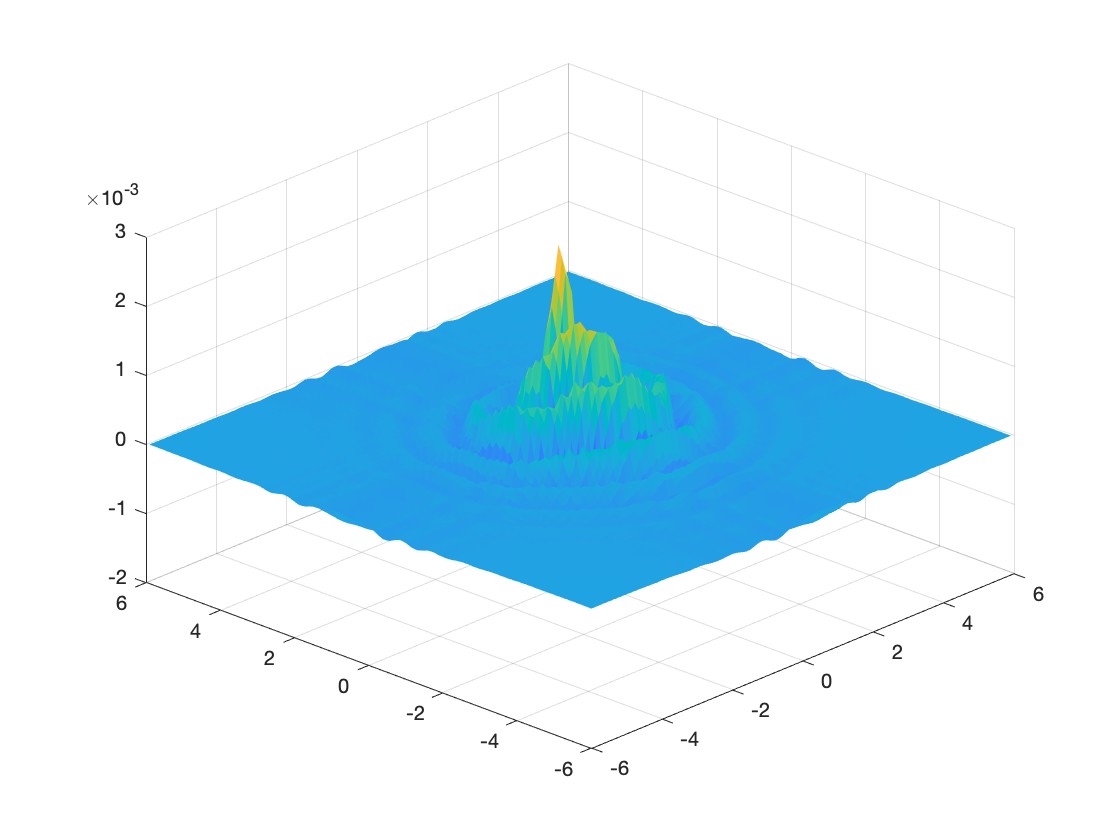}
\caption{Previous pilot-wave model. \label{fig7d}}
\end{subfigure}\hfill
\begin{subfigure}[t]{0.33\textwidth}
\centering
    \includegraphics[width=\textwidth]{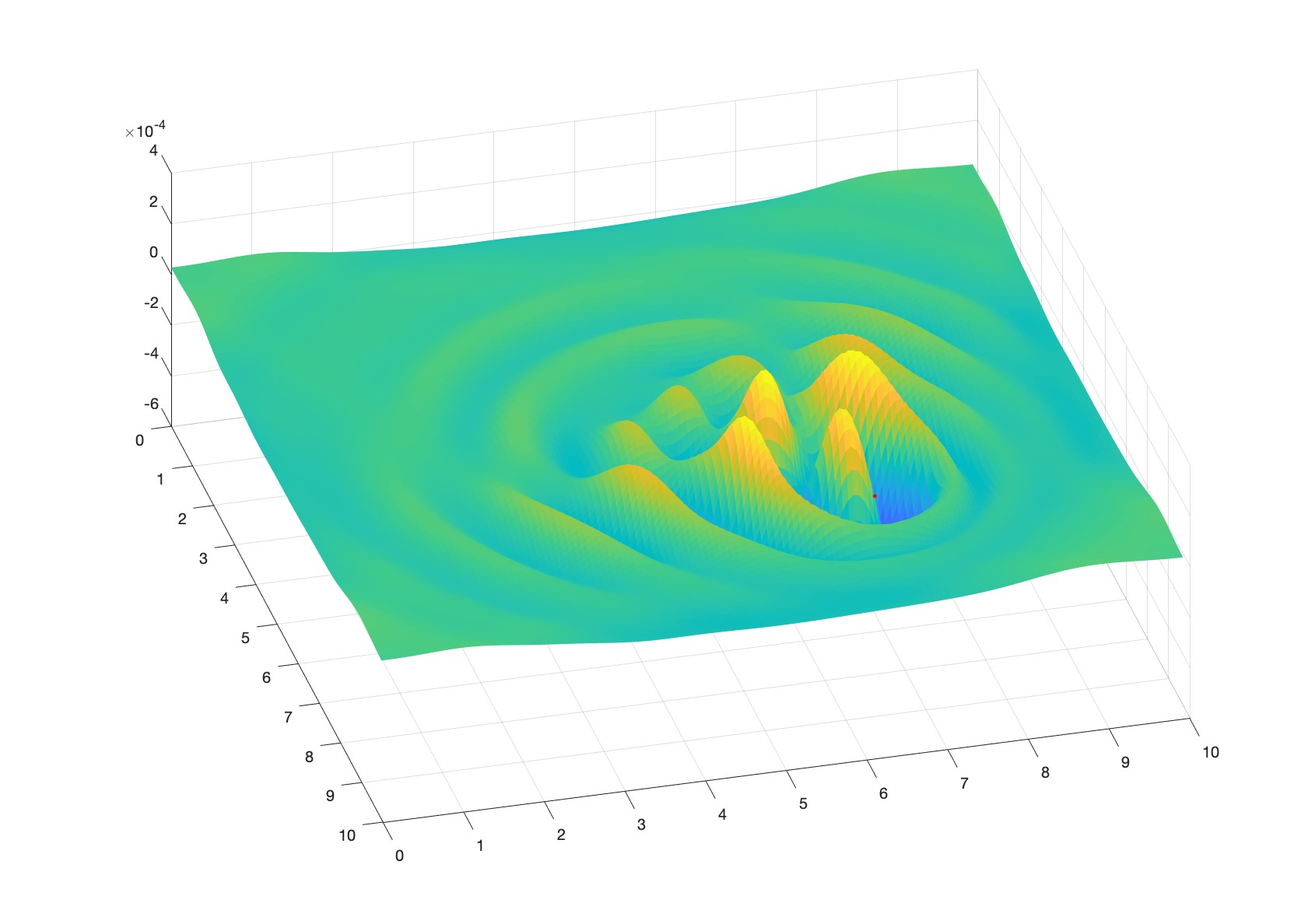}
\caption{Pseudo-spectral method.}
\end{subfigure}\hfill
\begin{subfigure}[t]{0.33\textwidth}
\centering
    \includegraphics[width=\textwidth]{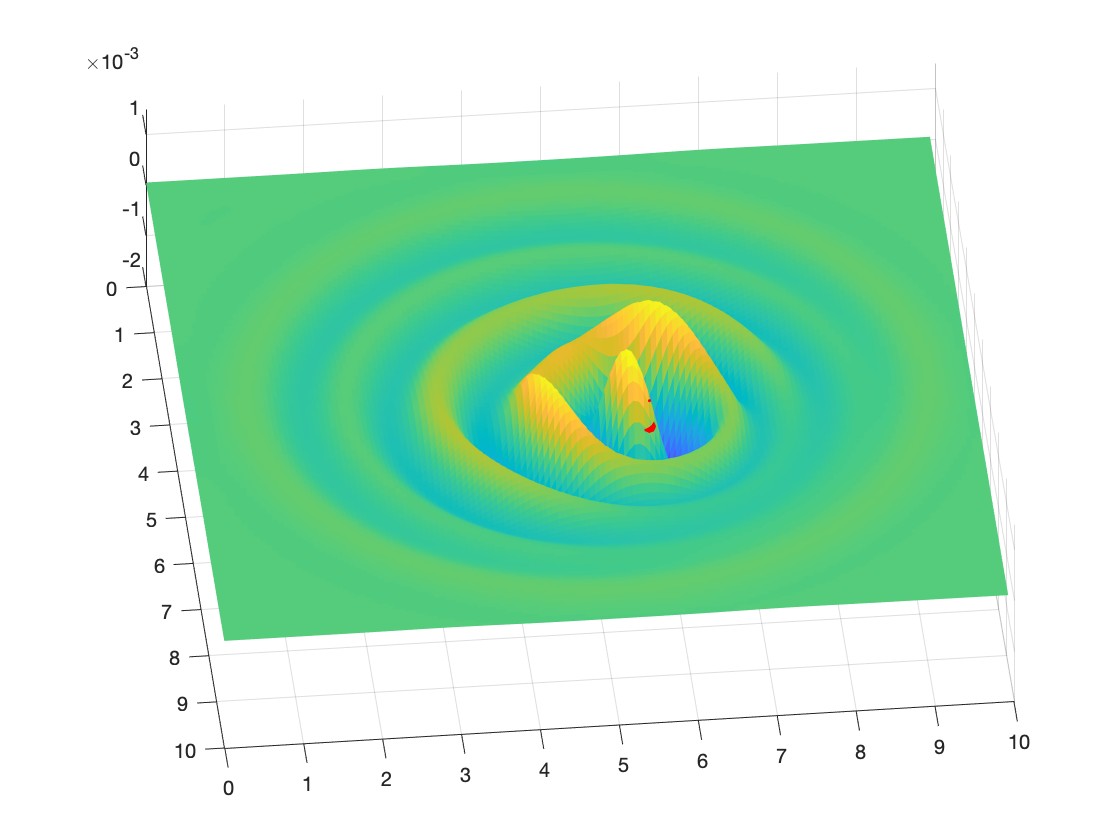}
\caption{Second-order central difference. \label{fig7f}}
\end{subfigure}\hfill

\begin{subfigure}[t]{0.33\textwidth}
\centering
  \includegraphics[width=\textwidth]{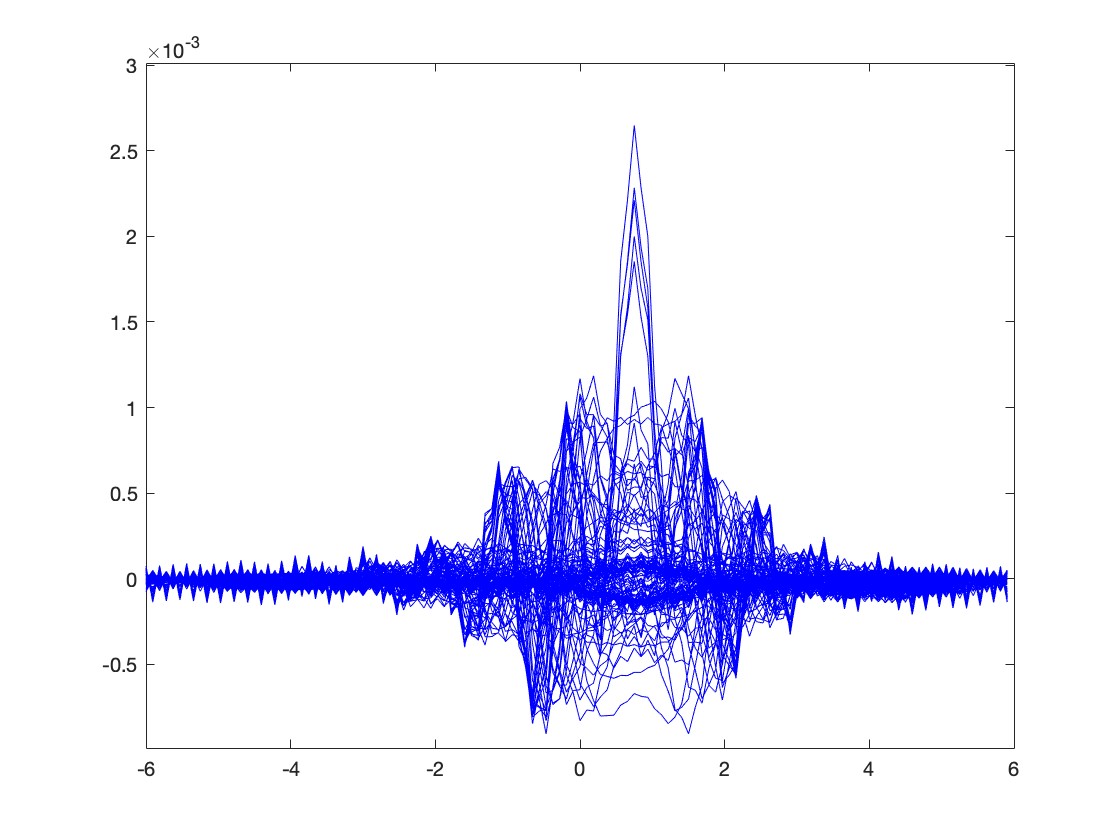}
\caption{Previous pilot-wave model. \label{fig7g}}
\end{subfigure}\hfill
\begin{subfigure}[t]{0.33\textwidth}
\centering
    \includegraphics[width=\textwidth]{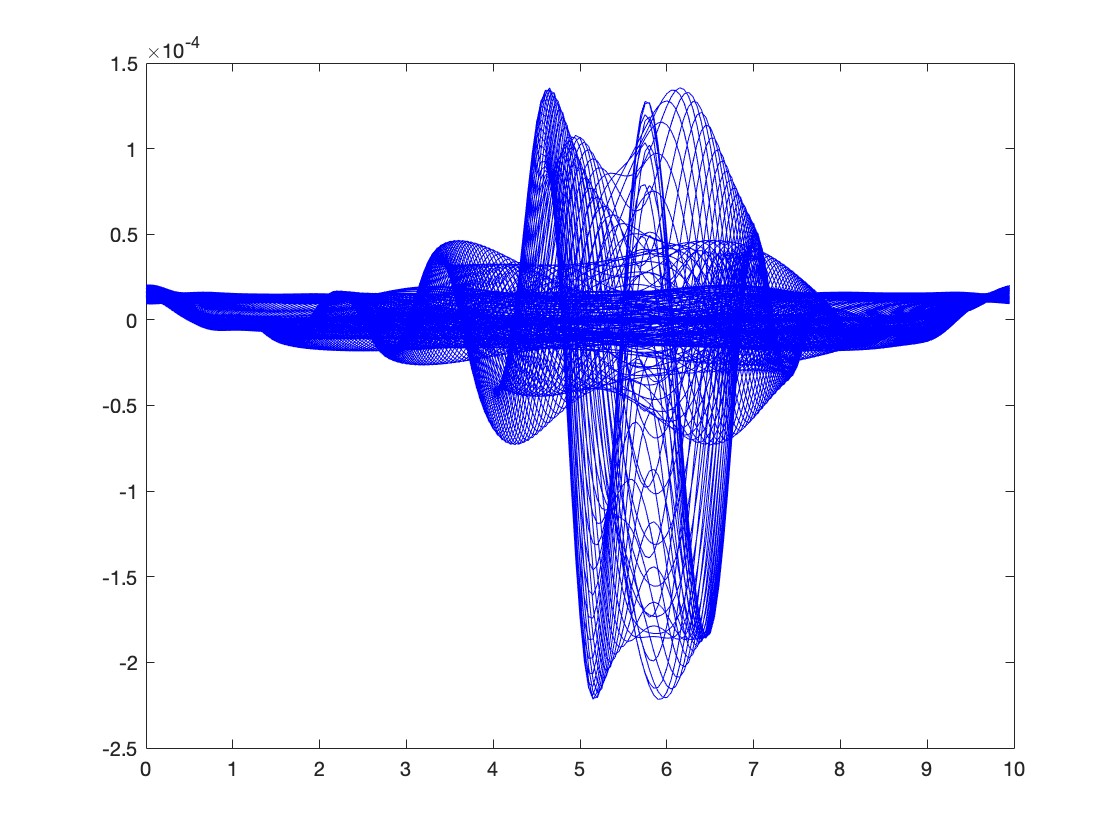}
\caption{Pseudo-spectral method.}
\end{subfigure}\hfill
\begin{subfigure}[t]{0.33\textwidth}
\centering
    \includegraphics[width=\textwidth]{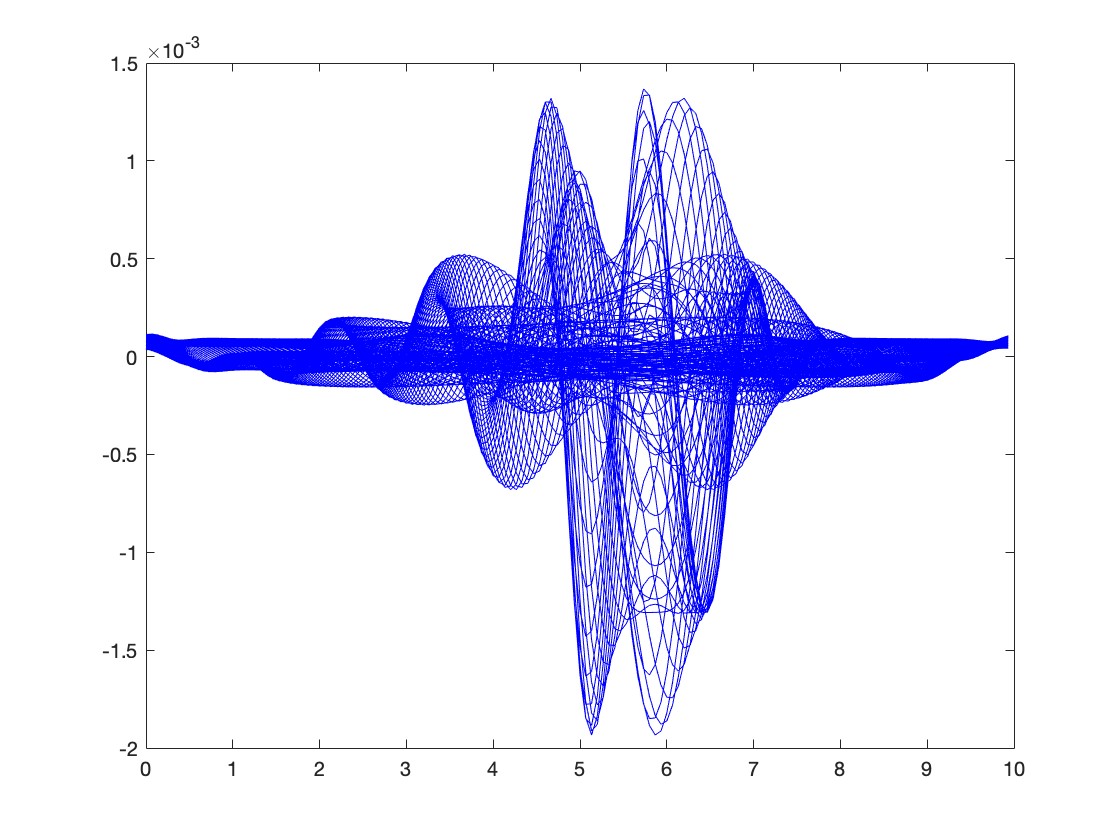}
\caption{Second-order central difference. \label{fig7i}}
\end{subfigure}\hfill
    \caption{A comparison of the ability of three models for 3D pilot-wave dynamics. We compared our results to the model developed by Faria \cite{faria2017model}, which is the existing state-of-the-art for droplet dynamics. Figures \ref{fig7a}-\ref{fig7c} display the wave-field resulting from five bounces of a \textit{bouncing} droplet; Figures \ref{fig7d}-\ref{fig7f} display the wave-field resulting from a \textit{walking} droplet; and Figures \ref{fig7g}-\ref{fig7i} display cross-sections of the wave-fields in Figures \ref{fig7d}-\ref{fig7f}. Created by student researcher.}
    \label{fig-big-big-big}
\end{figure}

In Figure \ref{fig-big-big-big}, we compare the results of our two numerical approaches with the state-of-the-art hydrodynamic model developed by Faria \cite{faria2017model}. Our methods of estimating the propagation of a walking droplet and the evolution of its wave-field in three dimensions exhibit significantly higher accuracy, smoothness, and qualitative similarity to experiment than state-of-the-art hydrodynamic models. Additionally, the decay rate of droplet velocities in the pseudo-spectral method most closely resembles that of experimental decay rates.

Although our current simulation capabilities exclusively utilize straightforward cavity geometries, our model allows for the most accurate known numerical characterization of droplet dynamics. Specifically, our model extends droplet dynamics to complex variable topographies in greater detail than existing studies, which either invoke lower-dimensional procedures \cite{nachbin2017tunneling}, omit droplet dynamics \cite{andrade2018three}, or invoke an imprecise approximation of the DtN operator \cite{faria2017model}.

\section{Experimental Results}
\label{sec5}

\begin{figure}[!t]
\begin{subfigure}[t]{0.49\textwidth}
\centering
 \includegraphics[width=0.5\linewidth]{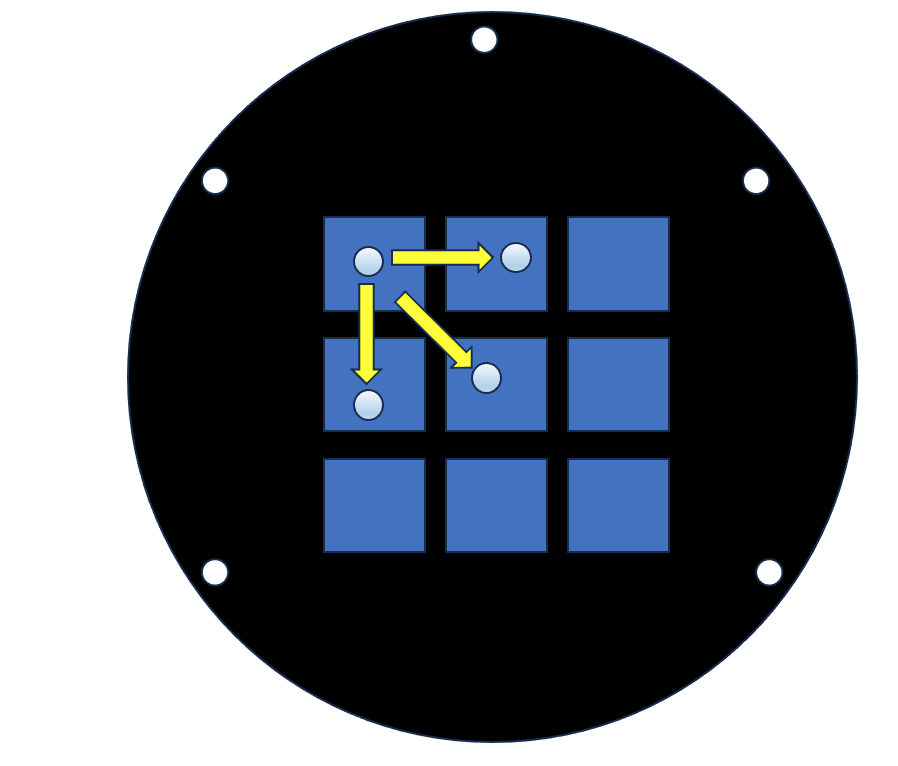}
\caption{Bird's-eye view.}
\end{subfigure}%
 ~
\begin{subfigure}[t]{0.49\textwidth}
\centering
  \includegraphics[height=1.2in]{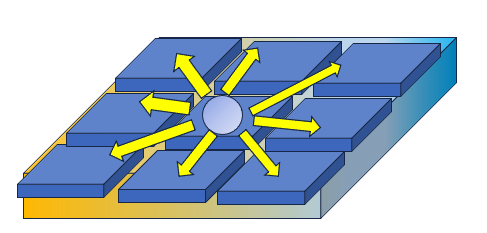}
\caption{Side profile.}
\end{subfigure}
    \centering
    \caption{A simplified diagram of the experimental setup. We present two views of the laser-cut cavity geometry. The droplet occasionally exhibits diagonal tunneling between cavities, although such instances of tunneling are rare. Created by student researcher.}
    \label{fig7}
\end{figure}

In addition to developing the theory of droplet dynamics within a three-dimensional region with varying bottom topography, we conducted experimental investigations of droplet tunneling between cavities. We now describe several experimental behaviors of interest.

\paragraph{Experimental Setup.} The experimental apparatus consisted of acrylic plates, a piezoelectric droplet generator, and a speaker providing vertical vibrations of the form $\gamma \cos(2 \pi \omega t)$. We utilized values of $\gamma$ close to the Faraday instability, and fixed $\omega = 80$ Hz.

To ensure the existence of bouncing and walking states (such as in Figure \ref{fig-wave-field-for-intro}), we utilized low-viscosity silicone oil, with dynamic viscosity $\mu \approx 2 \cdot 10^{-2}$ Pa$\cdot$s, surface tension $\sigma \approx 0.0209$ N$\cdot$m$^{-1}$, and density $\rho = 0.965 \, \text{kg}/{\text{m}^3}$. The depth of the silicone oil layer in each cavity was set to precisely 6 mm, with a 0.5 $\pm$ 0.05 mm thick oil film above each cavity to assist crossing. We constructed our laser-cut cavity geometry to optimize for tunneling probability, ensuring that wells were sufficiently wide to provide the droplet sufficient momentum to cross the barrier, yet not excessively large to the extent that the droplet became trapped in the same cavity or escaped the entire cavity system (see Figure \ref{fig7}). 

\begin{figure}[!t]
     \centering
    \begin{subfigure}[b]{0.31\textwidth}
        \centering
        \includegraphics[height=1.2in]{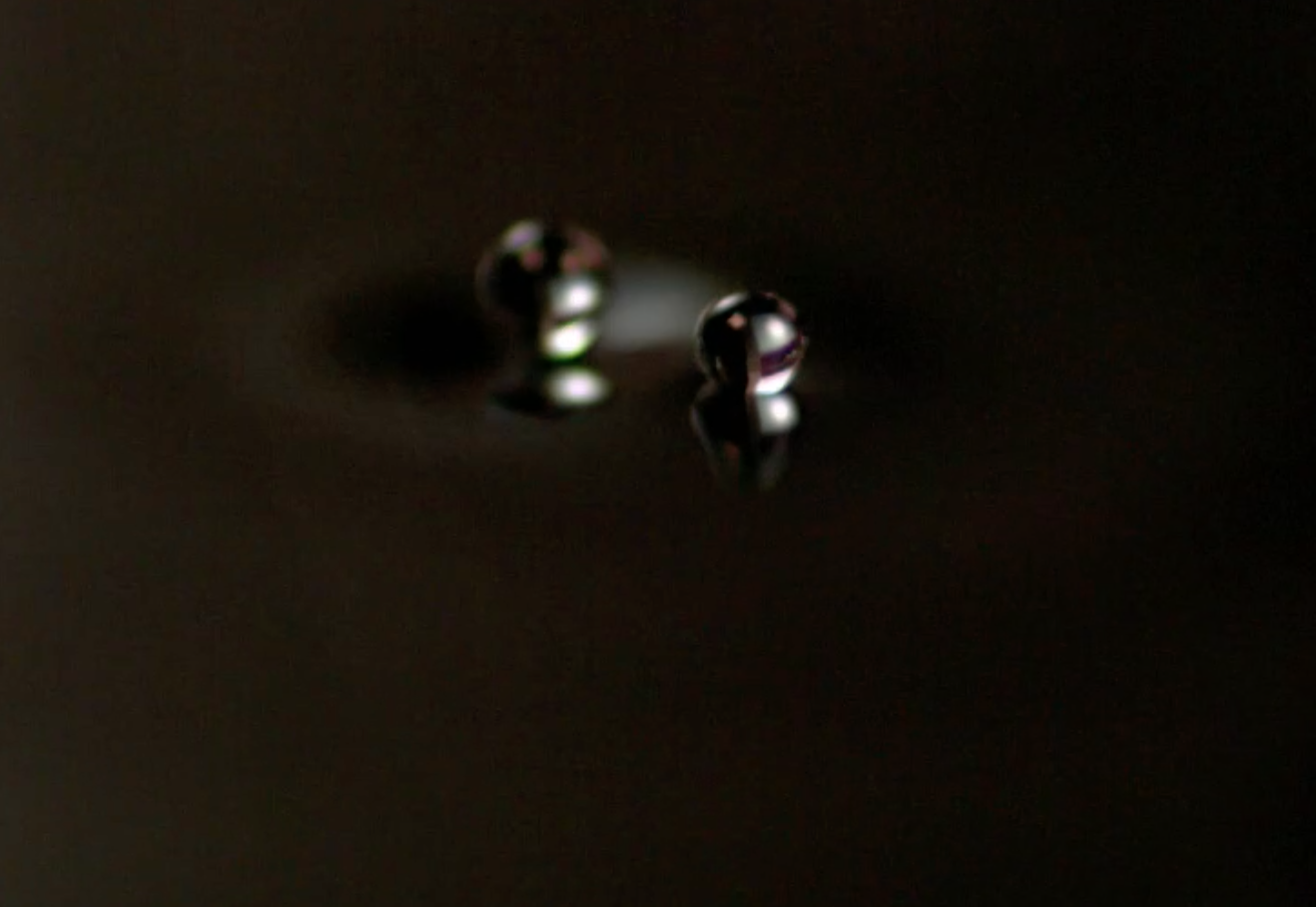}
        \caption{Droplet wave-field overlap exhibited prior to tunneling.}
    \end{subfigure}%
    ~ 
    \begin{subfigure}[b]{0.31\textwidth}
        \centering
        \includegraphics[height=1.2in]{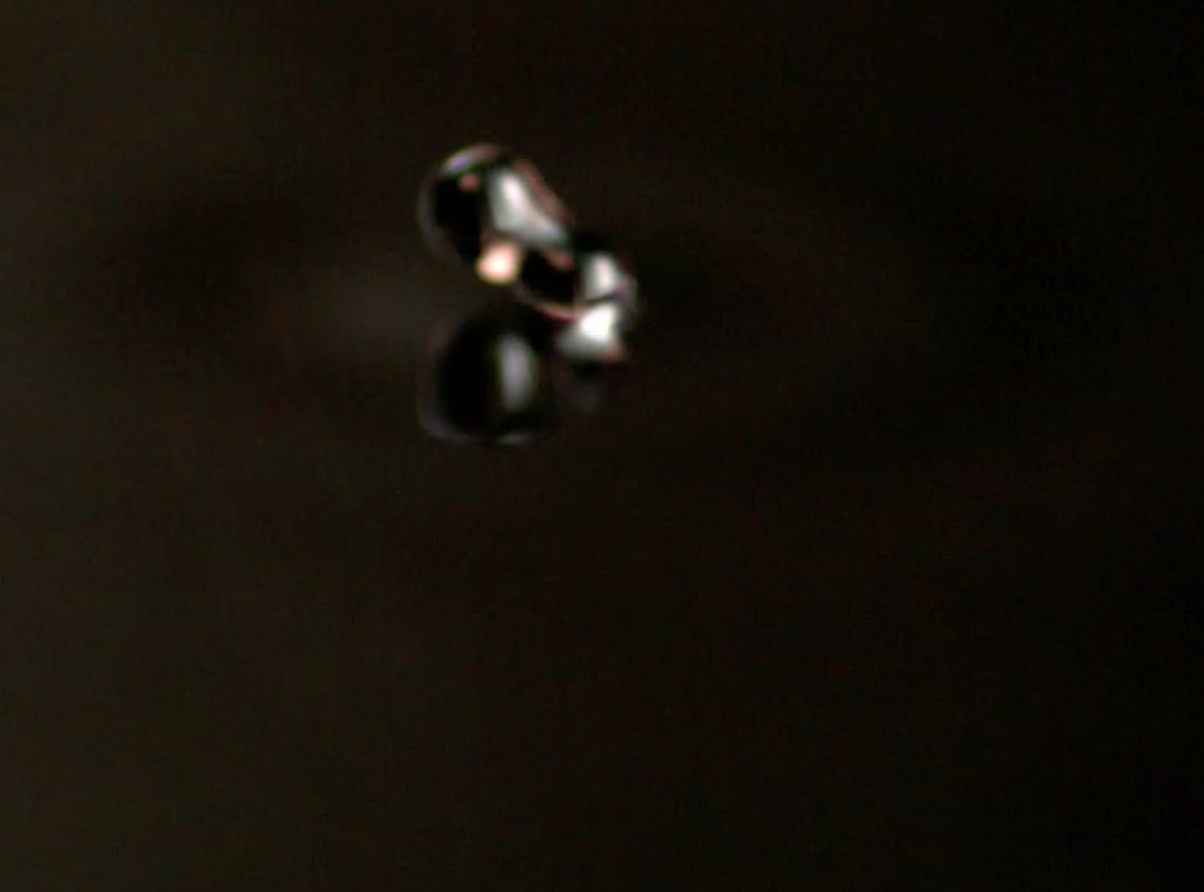}
        \caption{Two-droplet coalescence initiated after tunneling over barrier.}
    \end{subfigure}
    ~
    \begin{subfigure}[b]{0.31\textwidth}
        \centering
        \includegraphics[height=1.2in]{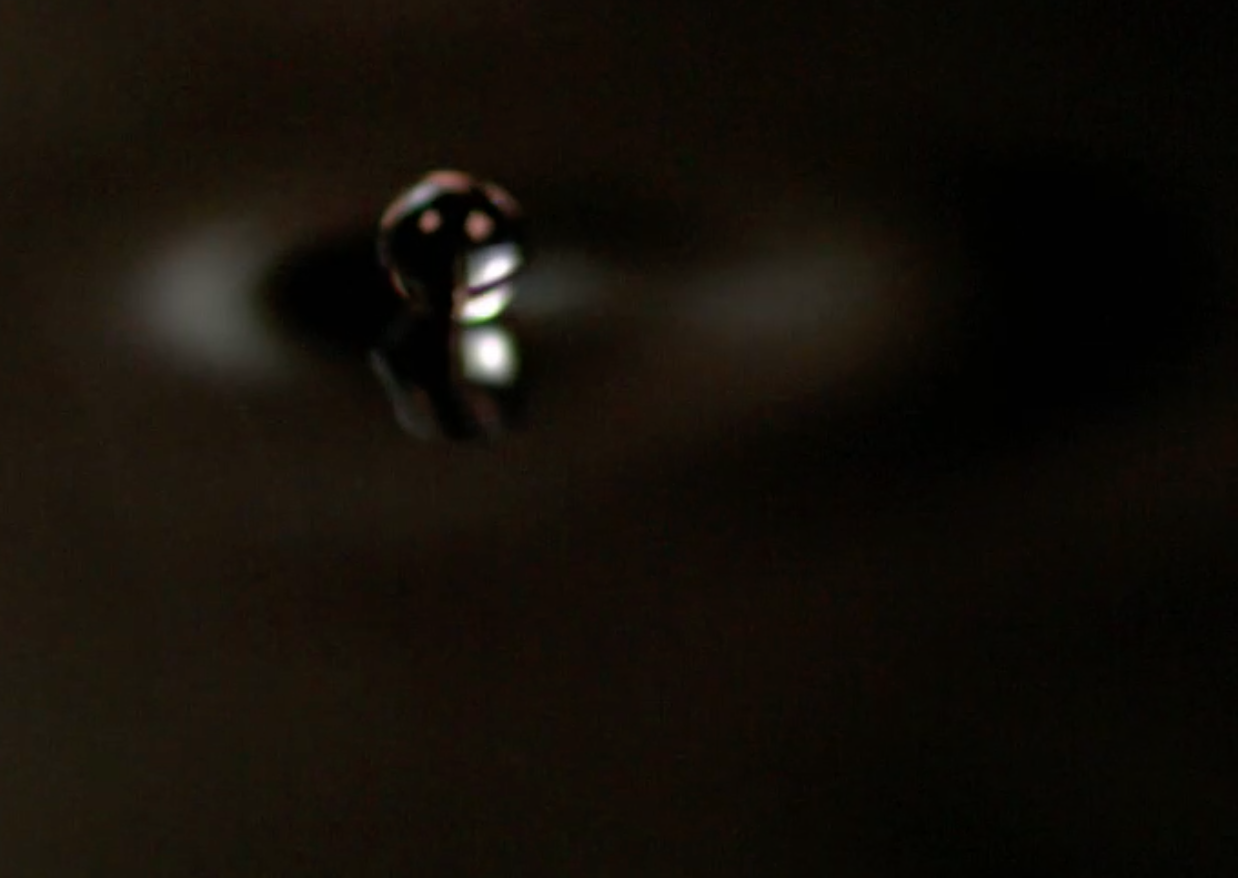}
        \caption{Single-droplet system results after merging process is complete.}
    \end{subfigure}
    \caption{Two droplets, with coupled wave-fields, exhibiting successful cooperative tunneling. The larger droplet attempts to force the smaller droplet over the barrier, then effectively merges with the smaller droplet to form a stationary bouncing droplet. Created by student researcher.}
    \label{fig:two_droplets}
\end{figure}

\paragraph{Experimental Results.} Tunneling possibilities were highly sensitive to values of $\gamma$ and cavity width; we determined the optimal cavity width for high crossing probabilities to be approximately 8.9 $\pm$ 0.05 mm at $\gamma = 4.19 \pm 0.02$ m/s$^2$. In such parameter regimes, we observed two droplets tunneling collectively across a barrier, with higher probability than that of individual droplet tunneling (see Figure \ref{fig:two_droplets}). Such tunneling effects may potentially have connections to the quantum-mechanical phenomenon of superradiance.

An investigation of the average rate of tunneling over three one-hour experiments demonstrated that the droplet tunneled at a rate of 9.3 barrier crossings per minute. Figure \ref{fig:heatmap} provides a description of the distribution of droplet positions through a relative comparison of durations in which the droplet occupied each well. The droplet consistently tunneled around exterior cavities without entering in the center cavity, suggesting the existence of a novel hydrodynamic quantum analogy to angular momentum.

\begin{figure}[!t]
    \centering
    \includegraphics[width=0.5\linewidth]{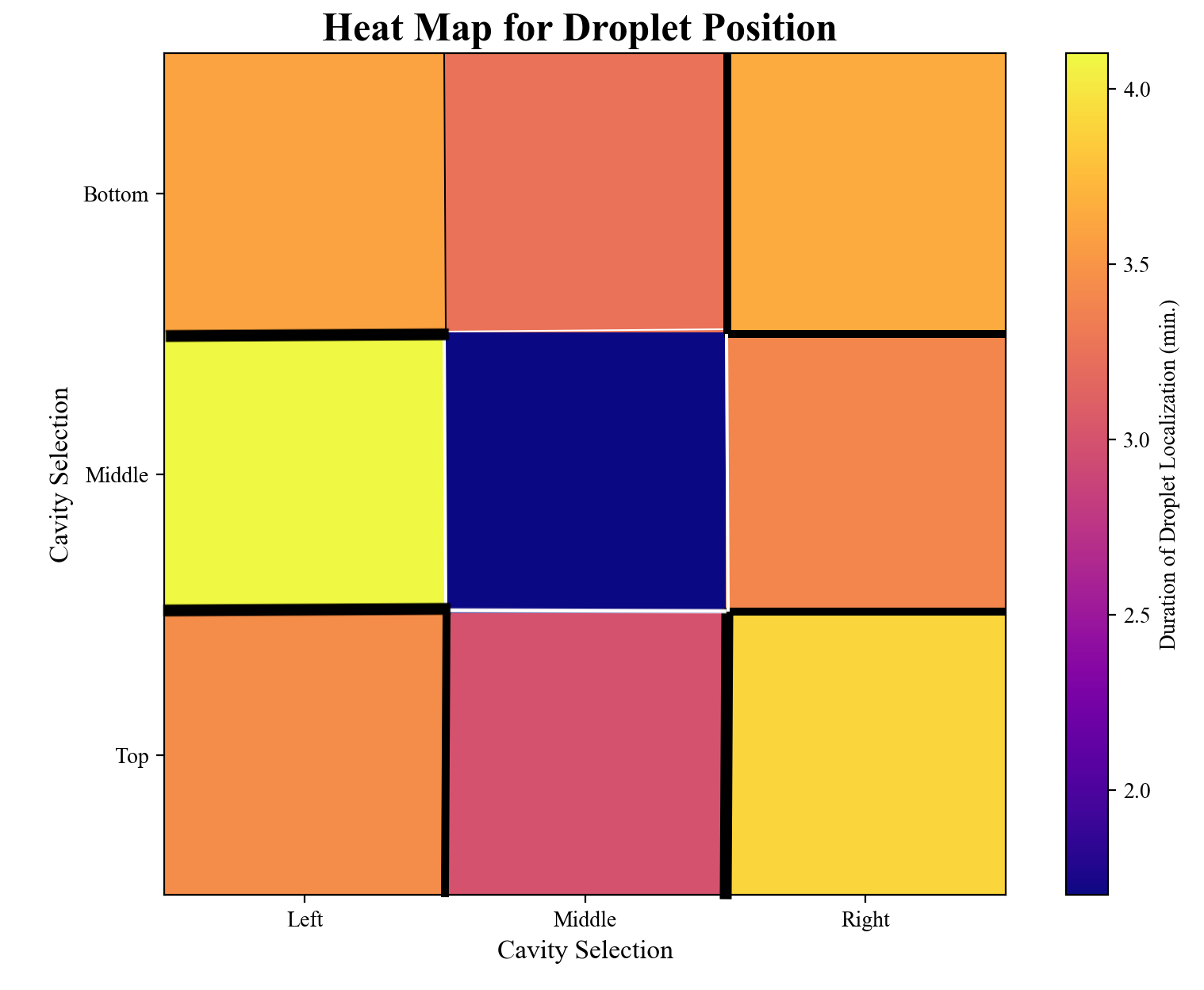}
    \caption{A distribution of the positions of the droplet over thirty minutes of tunneling, averaged over 10 trials. Each square cavity is given a “temperature" designation using the droplet's affinity for that particular cavity; such affinity is measured through the duration in which the droplet occupied the cavity. Additionally, the relative widths of the edges between cavities scale proportionally to the frequency of crossing with respect to each edge. Created by student researcher.}
    \label{fig:heatmap}
\end{figure}

\section{Conclusion}
\label{sec7}

\paragraph{Discussion and Key Takeaways.}

Our investigations provide a synthesis of theoretical, experimental, and numerical to understand droplet dynamics. In addition to experimental analysis of droplet tunneling in cavity geometries, we utilized our theoretical framework to characterize droplet hydrodynamics using a highly accurate model.

Our experimental studies provide a detailed characterization of novel tunneling-related phenomena. We observed that droplets were able to tunnel with higher probability when doing so cooperatively, and we discovered that droplets develop an effective angular momentum while tunneling – a previously unknown hydrodynamic quantum analogy.

From a theoretical standpoint, we resolved multi-scale dynamics involving coupled droplet and wave-field evolutions. Our numerical model exhibits greater accuracy and generalizability than existing models, allowing for a vast applicability in dynamical analysis of droplet behaviors \cite{bush2015pilot} Additionally, our numerical model for non-local wave interactions over variable topographies in shallow-water regimes, especially the DtN formalism, may be scaled into a broader framework to describe coastal ocean wave dynamics.

\paragraph{Future Work.} We hope to extend our simulations to include more complex cavity geometries in order to numerically describe tunneling probabilities. Additionally, we would like to be able to use compute clusters to accelerate the process of time-evolution for our simulations. Finally, we hope to investigate how effectively droplets bound in a so-called \textit{promenading} state \cite{arbelaiz2018promenading} exhibit cooperative tunneling at differing incidence angles.

\pagestyle{empty}
\bibliographystyle{style}

\bibliography{biblio}

%

\appendix
\gdef\thesection{Appendix \Alph{section}}

\end{spacing}
\end{document}